\numberwithin{equation}{section}
\theoremstyle{plain}
\newtheorem{thm}{Theorem}[section]
\newtheorem{lem}{Lemma}[section]
\newtheorem{cor}{Corollary}[section]
\newtheorem{prop}{Proposition}[section]
\def \A {\mathcal{A}}
\def \R {\mathbb{R}}
\def \B {\mathcal{B}}
\def \x {\bm{x}}
\def \bt {\bm{\theta}}
\def \bT {\bm{\Theta}}
\def \bl {\bm{\lambda}}
\def \H {\mathbb{H}}
\def \D {\mathcal{D}}
\def \P {\mathbb{P}}
\def \E {\mathbb{E}}
\def \V {\mathbb{V}}
\begin{document}

%\nocite{*}

\title{Local Uncertainty Sampling for Large-Scale Multi-Class \\
Logistic Regression}

\author{Lei Han$^{*\ddagger}$, Kean Ming Tan$^{\dagger}$, Ting Yang$^{\ddagger}$, Tong Zhang$^{*}$
\bigskip\\
$^*$Tencent AI Lab, Shenzhen, China\\
$^{\dagger}$School of Statistics, University of Minnesota, USA\\
$^{\ddagger}$Department of Statistics, Rutgers University, USA
}

\date{}

\maketitle

\begin{abstract} 
A major challenge for building statistical models in the big data era is that the available data volume far
exceeds the computational capability.
A common approach for solving this problem is to employ a subsampled dataset that can be handled
by available computational resources.
In this paper, we propose a general subsampling scheme for large-scale multi-class logistic regression and examine the variance of the resulting estimator.
We show that asymptotically, the proposed method always achieves a smaller variance than that of the uniform random sampling. Moreover, when the classes are conditionally imbalanced, significant improvement over uniform sampling can be achieved.
Empirical performance of the proposed method is compared to other methods on both simulated and real-world datasets, and these results match and confirm our theoretical analysis.
\end{abstract}

\section{Introduction}\label{sec:intro}

In recent years, data volume has grown exponentially, and this has created demands for building statistical methods to analyze huge datasets. 
In particular, the size of these datasets far exceeds the available computational capability at hand.
For instance, it may not be computationally feasible to perform standard statistical procedures on a single machine when the datasets are huge.
Although one remedy is to develop sophisticated distributed computing systems that can directly handle large datasets, the increased
system complexity makes this approach not suitable for all scenarios.
Another remedy to this problem is to employ a subsampled dataset that can be handled by existing computational resources.
In fact, such an approach is widely used to solve big data problems. 
However, subsampling may suffer from loss of statistical accuracy, i.e., the variance of the resulting estimator may be large.
Therefore a natural question is to tradeoff statistical accuracy for  computational efficiency by
designing an effective sampling scheme that can minimize the reduction of statistical accuracy given
a certain computational capacity.

In this paper, we examine the subsampling approach for solving large-scale multi-class logistic regression problems that are common in practical applications.
The general idea of subsampling is to assign an acceptance probability for each data point and select observations according to the assigned probabilities.
After the subsampling procedure, only a small portion of the data are selected from the full dataset.  
Hence, the model built using the subsampled data will not be as accurate as that of the full data.
The key challenge is to design a good sampling scheme together with the corresponding estimation procedure 
such that the loss of statistical accuracy is minimized, given some fixed computational resource.  Here, the required computational resource can be measured by the number of subsampled data.

There has been substantial work on subsampling methods for large-scale statistical estimation problems \cite{cortes2008sample,cortes2010learning,dhillon2013new,mineiro2013loss,xie1989logit,zadrozny2004learning,zhang2000value}. The simplest method is to subsample the data uniformly. However, uniform subsampling assigns the same acceptance probability to every data point, which fails to differentiate the importance among the samples.
For example, a particular scenario often encountered in practical applications of logistic regression is when the class labels are imbalanced.
This problem has attracted significant interests in the machine learning literature (see survey papers in \cite{chawla2004editorial,he2009learning}). Generally, there are two types of commonly encountered class imbalance situations:  \emph{marginal imbalance} and \emph{conditional imbalance}. In the case of marginal imbalance, some classes are much rarer than the other classes. 
This situation often occurs in applications such as fraud and intrusion detection \cite{abe2004iterative, kim2002pattern}, disease diagnoses \cite{widodo2007support}, protein fold classification \cite{tan2003multi}, and etc.
On the other hand, conditional imbalance describes the case when the labels (denoted as $y$) for most observations (denoted as $\x$) are easy to predict.
This happens in applications with very accurate classifiers  such as handwriting digits recognition \cite{lecun1998gradient} and web/email spam filtering \cite{fithian2014local,webb2006introducing}.
Note that marginally imbalance implies conditional imbalance, while the reverse is not necessarily true.

For marginally imbalanced binary classification problems, case-control subsampling (CC), which  uniformly selects an equal number of samples from each class,
has been widely used in practice in epidemiology and social science studies \cite{mantel1959statistical}.
Under this scheme, equal number of samples are subsampled from each class, and therefore the sampled data are marginally balanced.
It is known that case-control subsampling is more efficient than uniform subsampling when the datasets are marginally imbalanced.
However, since the acceptance probability relies on the response variable in CC subsampling,
the distribution of subsampled data is skewed by the sample selection process \cite{breslow1982design}.
It follows that correction methods are needed to adjust for the selection bias \cite{anderson1972separate,king2001logistic}.
Another method to remove bias in CC subsampling is to weight each sampled data point by the inverse of its acceptance probability. This is known as the weighted case-control method, which has been shown to be consistent and unbiased \cite{horvitz1952generalization}, but may increase the variance of the resulting estimator \cite{scott1986fitting, scott1991fitting, scott2002robustness}.

One drawback of the standard case-control subsampling is that it  does not consider the case when the data are conditionally imbalanced.
This issue was addressed in \cite{fithian2014local}, who proposed an improved subsampling scheme called
{\em Local Case-Control} (LCC) sampling for binary logistic regression.
The LCC method assigns each data point an acceptance probability determined not only by its label but also by its observed covariates. 
LCC assigns more importance on data points that are easy to be mis-classified according to a consistent pilot estimator, which is an approximate conditional probability estimator possibly obtained using a small number of uniformly sampled data. The method in \cite{fithian2014local} fits a logistic model with the LCC sampled data, and then apply a post-estimation correction to the resulting estimator using the pilot estimator. Therefore, the LCC sampling approach belongs to the correction based methods such as that of \cite{anderson1972separate,king2001logistic}. It was shown in \cite{fithian2014local} that given a consistent pilot estimator,  the LCC estimator is consistent with an asymptotic variance that may significantly outperform that of the uniform sampling and CC based sampling methods when the data is strongly conditionally imbalanced.

In this paper, we propose an effective sampling strategy for large-scale multi-class logistic regression problems that generalizes LCC sampling. Our proposed sampling procedure can be summarized in the following:
\begin{itemize}
\item [(1)] it assigns an acceptance probability for each data point and selects observations according to the assigned probabilities;
\item [(2)] it fits a multi-class logistic model with the subsampled observations to obtain an estimate of the unknown model parameter.
\end{itemize}
 In the proposed framework, the acceptance probability for each data point can be obtained using an \emph{arbitrary} probability function. Unlike correction based methods \cite{king2001logistic,fithian2014local} that are specialized for certain models such as linear model, we propose a maximum likelihood estimate (MLE) that integrates the correction into the MLE formulation, and this approach allows us to deal with arbitrary sampling probability and produces a consistent estimator within the original model family as long as the underlying logistic model is correctly specified. This new integrated estimation method avoids the post-estimation correction step used in the existing literature.

Based on this estimation framework, we propose a new sampling scheme that generalizes LCC sampling, which we describe briefly in the following. Given a rough but consistent prediction $\tilde{p}(y|\x)$ of the true probability distribution $p(y|\x)$, this scheme preferentially chooses data points with labels that are conditionally uncertain given their local observations $\x$ based on $\tilde{p}(\cdot|\cdot)$. The proposed sampling strategy is therefore referred to as \emph{Local Uncertainty Sampling} (LUS). We show that the LUS estimator can achieve an asymptotic variance that is never worse than that of the uniform random sampling. That is, we can achieve variance of no more than $\gamma$ $(\gamma \geq 1)$ times the variance of the full-sample based MLE by using no more than $1/\gamma$ of the sampled data in expectation. Moreover the required sample size can be significantly smaller than $1/\gamma$ of the full data when the accuracy of the rough estimate $\tilde{p}(y|\x)$ is high. This generalizes the result for LCC in \cite{fithian2014local}, which reaches a similar conclusion for binary logistic regression when $\gamma \geq 2$.

We also study the case when the model is misspecified. In this case, for binary classification, LUS has the same properties as those of LCC that the subsampling based estimator is consistent to the best estimator for the original population given a consistent pilot estimate. Unfortunately, for general multi-class problems, the LUS estimator is biased. Nevertheless, we empirically find that the LUS method works well for both binary classification and multi-class classification problems even when the model is misspecified.
%\DIFdelbegin \DIFdel{However, we show that its bias and variance inflation are asymptotically negligible if the probability estimate based on the best estimator from the misspecified model is close to the true probability distribution.}\DIFdelend

We conduct extensive empirical evaluations on both simulated and real-world datasets, showing that the experimental results match the theoretical conclusions and the LUS method significantly outperforms the previous approaches.

Our main contributions can be summarized as follows:
\begin{itemize}
	\item we propose a general estimation framework for large-scale multi-class logistic regression, which can be used with
          arbitrary sampling probabilities. The procedure always generates a consistent estimator within the original model family when the model is correctly specified.
          This method can be applied to general logistic models without the need of post-estimation corrections;
	\item under this framework, we propose an efficient sampling scheme which we refer to as local uncertainty sampling.
          For any $\gamma \geq 1$, the method can achieve asymptotic variance no more than that of the uniform subsampling with sampling probability $1/\gamma$, using an expected sample size no more than that of the uniform subsampling.
          Moreover, the required sample size can be significantly smaller than that of the uniform subsampling when the classification accuracy of the underlying problem is high.
 \end{itemize}

\section{Preliminaries of Multi-Class Logistic Regression}
\label{sec:prelim}
For a $K$-class classification problem, we observe data points $(\x,y) \in \R^d \times\{1,2,\ldots, K\}$ from an unknown underlying distribution $\D$,  where $\x$ is the feature vector and $y$ is the corresponding label.
% The label $c$ can be alternatively represented by a $K$-dimensional vector $(y_1, y_2, \cdots, y_{K})^\top$ with only one non-zero element $y_c=1$ at the corresponding class label $c$.
Given a set of $n$ independently drawn observations $\{(\x_i,y_i): i=1,\ldots, n\}$ from $\D$, we want to estimate $K$ conditional probabilities $\mathbb{P}_\D(Y=k|\bm{X}=\x)$ for $k=1, 2, \ldots, K$. This paper considers multi-class logistic model with the following parametric form:
{\small
\begin{align*}
	&\P_\D(Y=k|\bm{X}=\x) = \frac{e^{f(\x,\bt_k)}}{1+\sum_{k'=1}^{K-1}e^{f(\x,\bt_{k'})}}\quad\mathrm{for}~ k=1,  \ldots, K-1, \\
	&\P_\D(Y=K|\bm{X}=\x) = \frac{1}{1+\sum_{k'=1}^{K-1}e^{f(\x,\bt_{k'})}} ,
\end{align*}
}\noindent
where $\bm{\bt}_k$ is the model parameter for the $k$-th class and $f$ is a known function.
The above model implies that 
{\small
\begin{align}
	f(\x,\bt_k)=\log{\frac{\P_\D(Y=k |\bm{X}=\x)}{\P_\D(Y=K|\bm{X}=\x)}}\quad \mathrm{for}~ k=1,\ldots,K-1 .
	\label{eq:log-odds}
\end{align}
}\noindent
Let $\bT=(\bt^{\top}_1, \bt_2^{\top}, \ldots, \bt_{K-1}^{\top})^{\top}$ be the parameter for the entire model. Eq.~(\ref{eq:log-odds}) is specified in terms of $K-1$ log-odds with the constraint that the probabilities of each class should sum to one. Note that the logistic model uses one reference class as the denominator in the odds-ratios, and the choice of the denominator is arbitrary since the estimates are equivalent under this choice. We use the $K$-th class as the reference class throughout the paper.

When the underlying model is correctly specified, there exists a true parameter $\bT^0=(\bt^{0\top}_1, \bt_2^{0\top},$ $\ldots,\bt_{K-1}^{0\top})^{\top}$ such that
{\small
\begin{align}
  f(\x,\bt_k^0)=\log\frac{\P_\D(Y=k|\bm{X}=\x)}{\P_\D(Y=K|\bm{X}=\x)}.
  \label{eq:correctly-specified-condition}
\end{align}
}\noindent
Moreover, $\bT^0$ is the maximizer of the expected population likelihood:
{\small
\begin{align}
 L(\bT)=
 \E_{\x,y\sim\D}\left[\sum_{k=1}^{K-1}\mathbb{I}(y=k) \cdot f(\x,\bt_k)-\log\left(1+\sum_{k=1}^{K-1}e^{f(\x,\bt_k)}\right)\right],
 \label{eq:risk}
\end{align}
}\noindent
where $\mathbb{I}(\cdot)$ is an indicator function. In the maximum likelihood formulation of multi-class logistic regression, the unknown parameter $\bT^0$ is estimated from the data by maximizing the empirical likelihood:
{\small
\begin{equation}
  \hat{L}_n(\bT)=\frac{1}{n}\sum_{i=1}^n\left[\sum_{k=1}^{K-1}\mathbb{I}(y_i=k)\cdot f(\x_i,\bt_k)-\log\left(1+\sum_{k=1}^{K-1}e^{f(\x_i,\bt_k)}\right)\right].
 \label{eq:logistic}
\end{equation}
}\noindent

For large-scale multi-class logistic regression problems, $n$ can be extremely large. In such cases, solving the multi-class logistic regression problem \eqref{eq:logistic} may be computationally infeasible due to limited computational resources. To overcome this computational challenge, we  propose a subsampling framework in the following section.

\section{Subsampling based Estimation}
\label{sec:method}
In this section, we introduce the estimation framework with subsampling for large-scale multi-class logistic regression. There are two main steps:
\begin{enumerate}
\renewcommand{\labelenumi}{(\theenumi)}
	\item for every data point $(\x,y)$, suppose that the arbitrary sampling probability function $a(\x,y) \in [0,1]$ is given.
          For each pair of observation $(\x_i,y_i)$ ($i=1,\ldots,n$), generate a random binary variable $z_i \in \{0,1\}$,  drawn from the $\{0,1\}$-valued Bernoulli distribution $\B(\x_i,y_i)$ with acceptance probability
{\small
\[
	\P_{\B(\x_i,y_i)}(z_i=1)=a(\x_i,y_i);
\]
}\noindent
	\item keep the samples with $z_i=1$ for $i \in \{1,\ldots,n\}$. Fit a multi-class logistic regression model based on the selected samples by solving the optimization problem
\end{enumerate}
{\small
\begin{align}
\underset{\bT}{\max} ~
\frac{1}{n}\sum_{i=1}^n z_i\left[\sum_{k=1}^{K-1}\mathbb{I}(y_i=k)
f(\x_i,\bt_k)
-\log\left(1+\sum_{k=1}^{K-1}\frac{a(\x_i,k)}{a(\x_i,K)}e^{f(\x_i,\bt_k)}\right)\right].
\label{eq:model}
\end{align}
}

We now derive the above procedure under the assumption that
the logistic model is correctly specified as in Eq. (\ref{eq:correctly-specified-condition}).
As we will show later, the acceptance probability used in the first step can be an arbitrary function, and the above method always produces a consistent estimator for the original population. The computational complexity in the second step is reduced to fitting the model with $\sum_{i=1}^n z_i$ samples after the subsampling step.

Given $(\x,y)\sim\D$, we draw a random variable $z$ according to the Bernoulli distribution $\B(\x,y)$. This gives the following augmented distribution $\A$ for the joint variable $(\x,y,z) \in \mathbb{R}^d \times \{1,2,\ldots,K\} \times \{0,1\}$ with probability
{\small
\begin{align*}
\P_{\A}(\bm{X}=\x, Y=k, Z=z)
=
\P_\D(\bm{X}=\x, Y=k) [a(\x, k) \mathbb{I}(z=1)+ (1-a(\x, k)) \mathbb{I}(z=0)].
\end{align*}
}\noindent
Note that each sampled data pair follows $(\x_i,y_i) \sim \D$, and the random variable $z_i$ is independently drawn from $\B(\x_i, y_i)$. It follows that each joint data point $(\x_i,y_i,z_i)$ is drawn i.i.d. from the distribution $\A$. For the sampled data $(\x_i,y_i)$ with $z_i=1$, the distribution of random variable $(\x, y)$ follows from
{\small
\[
\P_{\A}(\bm{X}=\x, Y=k|Z=1) \propto \P_\D(\bm{X}=\x, Y=k) a(\x, k).
\]
}\noindent
Therefore, we have
{\small
\begin{align*}
\log\frac{\P_{\A}(Y=k|\bm{X}=\x, Z=1)}{\P_{\A}(Y=K|\bm{X}=\x, Z=1)}
=
\log\frac{\P_\D(Y=k|\bm{X}=\x)}{\P_\D(Y=K|\bm{X}=\x)}+\log\frac{a(\x, k)}{a(\x, K)} .
\end{align*}
}\noindent
If $f$ is correctly specified for $\D$, then the following function family
{\small
\begin{align}
	g(\x,\bt_k)=f(\x,\bt_k)+\log\frac{a(\x,k)}{a(\x,K)}
\label{eq:logodds}
\end{align}
}\noindent
is correctly specified for $\A$, i.e., the true parameter $\bT^0$ in Eq. (\ref{eq:correctly-specified-condition})
also satisfies
{\small
\[
g(\x,\bt_k^0)=\log\frac{\P_{\A}(Y=k|\bm{X}=\x, Z=1)}{\P_{\A}(Y=K|\bm{X}=\x, Z=1)} \quad \mathrm{for}~k=1, \ldots, K-1.
\]
}\noindent
Therefore, we have the following logistic model under $\A$:
{\small
\begin{align*}
 \P_{\A}(Y=k|\bm{X}=\x, Z=1)
 &=\frac{e^{g(\x,\bt_k)}}{1+\sum_{k=1}^{K-1} e^{g(\x,\bt_k)}} \quad \mathrm{for}~k= 1,  \ldots, K-1, \\
 \P_{\A}(Y=K|\bm{X}=\x, Z=1)&=\frac{1}{1+\sum_{k=1}^{K-1} e^{g(\x,\bt_k)}}.
\end{align*}
}\noindent
It follows that given arbitrary sampling probability function $a(\x,y) \in [0,1]$, $\bT^0$ can be obtained by using MLE with respect to the new population $\A$:
{\small
\begin{align}
\underset{\bT}\max~R(\bT): 
=\E_{\x,y,z \sim \A}\ z\left[\sum_{k=1}^{K-1}\mathbb{I}(y=k)\cdot g(\x,\bt_k) - \log \left(1 + \sum_{k=1}^{K-1}e^{g(\x,\bt_k)}\right) \right].
\label{eq:LUSriskPop}
\end{align}
}\noindent
In practice, the model parameter $\bT^0$ can be estimated by empirical conditional MLE with respect to the sampled data $\{(\x_i,y_i,z_i): i=1,\ldots,n\}$:
{\small
\begin{align}
\underset{\bT}\max~\hat{R}_n(\bT):
=\frac{1}{n}\sum_{i=1}^n z_i\left[\sum_{k=1}^{K-1}\mathbb{I}(y_i=k)\cdot g(\x_i,\bt_k)-\log\left(1+\sum_{k=1}^{K-1}e^{g(\x_i,\bt_k)}\right)\right],
\label{eq:MLE_lus}
\end{align}
}\noindent
which is equivalent to Eq. (\ref{eq:model}). Let $\hat{\bT}_{Sub} =\arg\max_{\bT}\hat{R}_n(\bT)$ be the subsampling based estimator. As we will see in the next section, $\hat{\bT}_{Sub}$ is a consistent estimator of $\bT^0$ when the model is correctly specified.

\section{Asymptotic Analysis}\label{sec:theory}

In this section, we examine the asymptotic behavior of the proposed method described in Section~\ref{sec:method}. All of the proofs are provided in Appendix \ref{appendix:proof}.

\subsection{Consistency and Asymptotic Distribution}
First, based on the empirical likelihood in Eq. (\ref{eq:MLE_lus}), we have the following result for $\hat{\bT}_{Sub}$.

\begin{thm}[Consistency and Asymptotic Normality]
\label{thm:variance}
Suppose that the parameter space is compact with $\P_{\D}(f(\x,\bT) \neq f(\x,\bT^0)) >0$ for all $\bT\neq\bT^0$ and $\x\sim\D$. Moreover, assume the quantities $\|\nabla_{\bt_k} f(\x,\bt_k)\|$, $\|\nabla_{\bt_k}^2 f(\x,\bt_k)\|$ and $\|\nabla_{\bt_k}^3 f(\x,\bt_k)\|$ for $k=1,\ldots,K-1$ are bounded under some norm $\|\cdot\|$ for any $\bT$.\footnote{\small This assumption can be further weakened with some more complex arguments, e.g., by using the moment assumption on the derivatives.} 
Let $p(\x, k)=\mathbb{P}_{\D}(Y=k|\bm{X}=\x)$. If Eq. (\ref{eq:correctly-specified-condition}) is satisfied, i.e., the model is correctly specified, then given an arbitrary sampling probability function $a(\x, y)$, as $n\to\infty$,  the following holds:
\begin{enumerate}
\renewcommand{\labelenumi}{(\theenumi)}
  \item $\hat{\bT}_{Sub}$ converges to $\bT^0$;
  \item $\hat{\bT}_{Sub}$ asymptotically follows the normal distribution:
{\small
  \begin{align}
  	\sqrt{n}\left(\hat{\bT}_{Sub}-\bT^0\right)\xrightarrow{d}\mathcal{N}\left(\bm{0},
  	\left[
\E_{\x\sim\D}
	\bm{\nabla}
	\mathbf{S}
	\bm{\nabla}^{\top}
\right]^{-1}
\right),
  \label{eq:finalvariance}
  \end{align}
  }\noindent
\end{enumerate}
where {\small $\bm{\nabla}=diag\left(\left[\nabla_{\bt_1}f(\x,\bt_1^0),\nabla_{\bt_2}f(\x,\bt_2^0),\ldots, \nabla_{\bt_{K-1}}f(\x,\bt_{K-1}^0)\right]\right)$} is a block diagonal matrix, each block of which is $\nabla_{\bt_k}f(\x,\bt_k^0)$, and
{\small
\begin{align}
	\mathbf{S}=
	diag\left(
	\left[
	\begin{array}{c}
		a_1p_1\\
		a_2p_2\\
		\vdots \\
		a_{K-1}p_{K-1} \\
	\end{array}
	\right]
	\right)-
	\frac{1}{\sum_{k=1}^K a_kp_k}
	\left[
	\begin{array}{c}
		a_1p_1  \\
		a_2p_2  \\
		\vdots  \\
		a_{K-1}p_{K-1} \\
	\end{array}
	\right]
	\left[
	\begin{array}{c}
		a_1p_1  \\
		a_2p_2  \\
		\vdots  \\
		a_{K-1}p_{K-1} \\
	\end{array}
	\right]^\top,
	\label{eq:S}
\end{align}
}\noindent
with notations $a_k$ and $p_k$ indicating $a(\x,k)$ and $p(\x,k)$, respectively.
\end{thm}

Theorem \ref{thm:variance} shows that given an arbitrary sampling probability $a(\x,k)$, the proposed method in Section \ref{sec:method} generates a consistent estimator $\hat{\bT}_{Sub}$ \emph{without} post-estimation correction. This is different from existing methods such as the LCC in \cite{fithian2014local}, which employs post-estimation corrections.
One benefit of the proposed method is that without post-estimation correction, we can still produce a consistent estimator in the original model family, and our framework allows different sampling functions for different data points $(\x_i,y_i)$. For example, in time series analysis,  we may want to sample recent data points more aggressively than past data points. This can be handled naturally by our framework, but not by using the post-estimation correction based approach.
Another benefit is that the framework can be naturally applied with regularization, which can be regarded as a restriction on the parameter space for $\bT$.

From Theorem \ref{thm:variance}, the estimator $\hat{\bT}_{Sub}$ is asymptotically normal with variance 
$\left[
\E_{\x\sim\D}
	\bm{\nabla}
	\mathbf{S}
	\bm{\nabla}^{\top}
\right]^{-1}$.
Given a data point $(\x,y)$, although the sampling probability $a(\x,y)$ can be arbitrary, it is important to select a sampling probability such that the variance of the resulting estimator is as small as possible. In the following, we study a specific choice of $a(\x,y)$ that achieves this purpose.

\subsection{Sampling Strategy}
Recall from Eq.~\eqref{eq:S} the definition of $\mathbf{S}$.
Let $\mathbf{S}_{full}$ be the corresponding matrix $\mathbf{S}$ when we set $a(\x,k)=1$ for all $k$, i.e., we accept all data points in the dataset. Then
{\small
\begin{align}
	\mathbf{S}_{full}=
	diag\left(\left[
	\begin{array}{c}
		p(\x,1) \\
		p(\x,2) \\
		\vdots \\
		p(\x,K-1) \\
	\end{array}
	\right]\right)
	-
	\left[
	\begin{array}{c}
		p(\x,1)  \\
		p(\x,2)  \\
		\vdots  \\
		p(\x,K-1) \\
	\end{array}
	\right]
	\left[
	\begin{array}{c}
		p(\x,1)  \\
		p(\x,2)  \\
		\vdots  \\
		p(\x,K-1) \\
	\end{array}
	\right]^\top.
	\label{eq:S-full}
\end{align}
}\noindent
%%%%%%%%%%%%%%%%%%%%%%%%%%%%%%%%%%%%%%%%%%%%%%%%
If we set $a(\x,k)=\frac{1}{\gamma}$ for all $k$ for some $\gamma\geq1$, i.e., we sample a fraction of $\frac{1}{\gamma}$ of the full dataset uniformly at random, we denote the corresponding matrix as $\mathbf{S}_{US:\frac{1}{\gamma}}$. Then, $\mathbf{S}_{US:\frac{1}{\gamma}}=\frac{1}{\gamma}\mathbf{S}_{full}$.
In the following, we denote the asymptotic variances of $\hat{\bT}_{Sub}$, the full-sample based estimator and the estimator obtained from $\frac{1}{\gamma}$ uniformly sampled data as
{\small
\[
\mathcal{V}_{Sub}=\left[\E_{\x\sim\D}\bm{\nabla}\mathbf{S}\bm{\nabla}^\top\right]^{-1},
\mathcal{V}_{full}=\left[\E_{\x\sim\D}\bm{\nabla}\mathbf{S}_{full}\bm{\nabla}^\top\right]^{-1}
\ \text{and}\ \mathcal{V}_{US:\frac{1}{\gamma}}=\gamma\mathcal{V}_{full},
\]
}\noindent
respectively.
Our purpose is to find a better sampling strategy with lower variance than that of uniform sampling.
That is, we want to choose an acceptance probability function $a(\x,k)$ such that there exists some scalar $\gamma\geq1$ making
{\small
\[
\mathcal{V}_{Sub}\preceq\gamma\mathcal{V}_{full}=\mathcal{V}_{US:\frac{1}{\gamma}}
\]
}\noindent
under the constraint that
{\small
\[
\E_{\x, y\sim\D}\  a(\x, y) \leq \frac{1}{\gamma}.
\]
}\noindent
The constraint requires the expected subsample size to be no more than $n/\gamma$, i.e., we sample no more than $1/\gamma$ fraction of the full data.
%%%%%%%%%%%%%%%%%%%%%%%%%%%%%%%%%%%%%%%%%%%%%%%%%%%%%

\begin{thm}[Sampling Strategy]\label{thm:LUS}
Use the same assumptions and definitions in Theorem \ref{thm:variance}. For any data point $\x$, let
	{\small
	\begin{align}
	q(\x)=\max\big(0.5, p(\x,1), \ldots, p(\x,K)\big).
	\label{eq:q}
	\end{align}
	}\noindent
        Given any $\gamma\geq1$, consider the following choice of the acceptance probability function:
	\begin{enumerate}
	\renewcommand{\labelenumi}{(\theenumi)}
		\item for $\gamma\geq2q(\x)$,  set $a(\x,k)$ as
{\small
	\begin{align}
		a(\x,k)=
		\left\{
		\begin{array}{ll}
			\frac{2\left(1-q(\x)\right)}{\gamma}, & \text{if } p(\x,k)=q(\x)\geq 0.5 \\
			\frac{2q(\x)}{\gamma}, & \text{otherwise} \\
		\end{array}
		\right.,
		\ k=1,\ldots,K;
		\label{eq:a1}
	\end{align}
}\noindent
	\item for $1\leq\gamma<2q(\x)$, set $a(\x,k)$ as
{\small
	\begin{align}
		a(\x,k)=
		\left\{
		\begin{array}{ll}
			\frac{1-q(\x)}{\gamma-q(\x)}, & \text{if } p(\x,k)=q(\x)\geq 0.5 \\
			1, & \text{otherwise} \\
		\end{array}
		\right.,
		\ k=1,\ldots,K.
		\label{eq:a2}
	\end{align}
}
	\end{enumerate}
	Then, we always have
{\small
	\begin{align}
			\mathcal{V}_{Sub}\preceq\gamma\mathcal{V}_{full}=\mathcal{V}_{US:\frac{1}{\gamma}},
			\label{eq:relation}
	\end{align}
}\noindent
and the expected number of subsampled examples is
{\small
	\begin{align}
	n_{Sub}=n\E_{\x, y\sim\D}\ a(\x, y)
	\leq\frac{n}{\gamma}.
	\label{eq:expected}
	\end{align}
}
\end{thm}

It is easy to check that the assigned acceptance probability in Theorem \ref{thm:LUS} is always valid, i.e., it is a value in $[0,1]$.
With the sampling strategy in Theorem~\ref{thm:LUS}, we always use no more than a fraction $1/\gamma$ of the full data to obtain an estimator with variance no more than $\gamma$ times the variance of the full-sample based MLE. In other words, the proposed method is never worse than
the uniform sampling method. Moreover, the required sample size
$n_{Sub}$ can be significantly smaller than $n/\gamma$ under favorable conditions.
%For example, when $\gamma \geq 2$, it is easy to verify that
%$n_{Sub}/ (n/\gamma) = \E_{\x\sim\D}\ 4 q(\x) (1-q(\x))$, which is close to zero if $q(\x) \approx 1$ for most of the $\x$. This happens when the classification accuracy is high.

More precisely, we have the following formula for the expected conditional sampling probability:
{\small
\[
\E_{(\x, y) \sim\D}\ a(\x, y)=
\Bigg\{
  \begin{array}{ll}
    \frac{4}{\gamma} \ q(\x)(1-q(\x)), & \text{under case (1) in Theorem \ref{thm:LUS}}, \\
    \ \frac{\gamma(1-q(\x))}{\gamma-q(\x)}, & \text{under case (2) in Theorem \ref{thm:LUS}}. \\
  \end{array}
\]
}\noindent
Therefore, in favorable case where most $q(\x)\approx1$ for $\x\sim\D$, i.e., the data are conditionally imbalanced, our method will subsample very few examples to achieve the desired variance compared to that of $1/\gamma$ random sampling.

An intuitive explanation for this sampling strategy is that if there exists a class $k$ that dominates the other classes for any given $\x$, i.e., $p(\x,k)\geq0.5$, then the sampling probability will be proportional to `$1-accuracy$'. That is, when the classification accuracy for the problem is high, the proposed sampling method will significantly outperform the random sampling.

For binary classification problem ($K=2$), Theorem~\ref{thm:LUS} reduces to LCC sampling in \cite{fithian2014local} when $\gamma \geq 2$. Although a method to achieve a desired variance for the case of $\gamma \in [1,2)$ was also proposed in \cite{fithian2014local}, it is different from our sampling strategy. Moreover, it does not prove that the number of samples needed by LCC for achieving such a desired variance is never worse than that of the uniform sampling for the case of $\gamma\in[1,2)$.
In fact, the empirical performance of LCC can be worse than our method under the case of $\gamma \in [1,2)$ as we will show in the experimental section.

In the multi-class case, our method is not a natural extension of the LCC sampling, which would imply a method to set all class probabilities to $1/K$ after sampling.
Instead, we will only assign a smaller sampling probability for $(\x,y)$
when $p(\x,y) \geq 0.5$. The method is less likely to select a sample when the label $y$ coincides with the the prediction of the underlying true model, while it will likely be selected if $y$ contradicts the underlying true model.
Since the sampling strategy prefers data points with uncertain labels,  we refer it to as {\em Local Uncertainty Sampling} (LUS). In the following, we will indicate the estimator as $\hat{\bT}_{LUS}$ if the acceptance probability is set according to Eqs.~(\ref{eq:a1}) and (\ref{eq:a2}) (recall that we use the notation $\hat{\bT}_{Sub}$ with an arbitrary sampling function $a(\x,y)$).

%%%%%% Randomness on the acceptance probability
\subsection{Randomness on the Acceptance Probability}
In Theorem \ref{thm:LUS}, $a(\x,k)$ is set as a function of the true probability $p(\x,k)$ for $k=1,\ldots,K$. 
In practice, we never know $p(\x,k)$, because if we do there would not be a need to estimate the model parameters. 
Instead, we can use a roughly estimated probability $\tilde{p}(\x,k)$ to compute the acceptance probability, which we refer to as the \emph{pilot} estimate.

There are multiple ways to estimate $\tilde{p}(\x,k)$. For example, one can obtain a pilot estimator $\bm{\lambda}$ by fitting the model using a smaller independent data set, or obtain $\tilde{p}(\x,k)$ from a different and simpler parametric model or non-parametric methods.
This is different from the LCC method, which involves post-estimation correction that  relies on an explicit pilot estimator $\bm{\lambda}$ that is additive to the original model parameter. 
For simplicity, in the following analysis, we still assume that $\tilde{p}(\x,k)$ is computed from a pilot estimator $\bl$.

%To emphasize the dependencies , we rewrite $\mathbf{S}$ and $\mathbf{\Delta}$ as $\mathbf{S}(\bm{\lambda},\mathbf{\Theta})$ and $\mathbf{\Delta}(\boldsymbol{\Theta})$   

Recall from Theorem~\ref{thm:variance} that given an acceptance probability, the asymptotic variance for $\hat{\boldsymbol{\Theta}}_{LUS}$ is dependent on $\mathbf{S}$ and $\boldsymbol{\nabla}$.
Since $a(\x,k)$ is computed based on $\bl$, from Eq.~\eqref{eq:S}, we see that $\mathbf{S}$ is now a function of $\bm{\lambda}$ and the model parameter $\bT$.  We rewrite $\mathbf{S}$ as $\mathbf{S}(\bm{\lambda},\bT)$ to emphasize its dependency on $\bm{\lambda}$ and $\bm{\Theta}$. 
Similarly, we represent $\bm{\nabla}$ as $\bm{\nabla}(\bT)$. 
Moreover, with the use of the pilot estimator, from Eq.~\eqref{eq:LUSriskPop}, $R(\cdot)$ is now a function of $\bm{\lambda}$ and $\bT$, and we rewrite it as $R(\bm{\lambda},\bT)$.

In the following, we characterize the asymptotic distribution of $\hat{\boldsymbol{\Theta}}_{LUS}$ when the pilot estimator $\boldsymbol{\lambda}$ is estimated from an independent data set.
%Thus, if $\bm{\lambda} = \bT^0$, we have $\mathcal{V}_{LUS}=[\E_{\x\sim\D}\bm{\nabla}(\bT^0)\bm{S}(\bT^0, \bT^0)\bm{\nabla}(\bT^0)^\top]^{-1}$. 

\begin{cor}\label{cor:pilot}
Let $\tilde{p}(\x,k)$ be a probability estimate computed using a pilot estimator $\hat{\bl}$. If 
$\hat{\bl}\stackrel{p}{\rightarrow}\bT^0$ such that $\tilde{p}(\x,k)\stackrel{p}{\rightarrow} p(\x,k)$,
%$\sqrt{n}(\bl-\bT^0)\xrightarrow{d}\mathcal{N}(\bm{0},\mathcal{V}_{\lambda})$ for some variance $\mathcal{V}_{\lambda}$, 
we have
{\small
\[
\sqrt{n}\left(\hat{\bT}_{LUS}-\bT^0\right)\xrightarrow{d}\mathcal{N}\left(\bm{0}, \mathcal{V}_{LUS}\right),
\]
where $\mathcal{V}_{LUS}=[\E_{\x\sim\D}\bm{\nabla}(\bT^0)\mathbf{S}(\bT^0, \bT^0)\bm{\nabla}(\bT^0)^\top]^{-1}$ is a constant that is independent of the pilot estimator $\hat{\bl}$.
}
\end{cor}
Corollary~\ref{cor:pilot} implies that as long as the pilot estimator $\bl$ is a consistent estimator of $\boldsymbol{\Theta}^0$ such that $\tilde{p}(\x,k)$ converges to $p(\x,k)$, the randomness induced by the pilot estimator $\bl$ will not inflate the asymptotic variance of $\hat{\boldsymbol{\Theta}}_{LUS}$.  
The result is in conjunction with that of LCC when the model is correctly specified, and is also a generalization of their results to the case of $K>2$.

\subsection{Model Misspecification}
In practice, the model in Eq. (\ref{eq:correctly-specified-condition}) may not be correctly specified, so a true parameter $\bT^0$ may not exist. 
Under such case, we denote the best estimator obtained by maximizing Eq. (\ref{eq:risk}) for the original population $\D$ as $\bT^*$ and denote the corresponding probability estimate as $p^*(\x,k)$ for $k=1,\ldots,K$. Since the model is misspecified, we know that $p^*(\x,k)$ may not equal to $p(\x,k)$.

In the following, we study the properties of LUS and consider the cases when $K=2$ and $K>2$ separately. 
To distinguish between the two cases, we use lower-case letter $\bt$ to denote the model parameter for $K=2$,  i.e., binary classification problems. We first have the following results for $K=2$.

\begin{prop}\label{prop:mis1}
For $K=2$, suppose that the parameter space of $\bt$ is compact that $\P_{\D}(f(\x,\bt) \neq f(\x,\bt^*)) >0$ for all $\bt\neq\bt^*$ and $\x\sim\D$. Under model misspecification, let $\bl = \bt^*$ and we have
{\small
\[
\bt^*=\underset{\bt}{\arg\max} ~R(\bt^*,\bt).
\]
}\noindent
Moreover, $\bt^*$ is the unique maximizer of $R(\bt^*,\bt)$ and the LUS estimator is a consistent estimator of $\bt^*$.
\end{prop}
Proposition~\ref{prop:mis1} implies that under model misspecification, if we use a perfect pilot estimate $\bl = \bt^*$ to compute the acceptance probability, then $\hat{\bt}_{LUS}$ would also converge to $\bt^*$. Again, in practice, we never know $\bt^*$ a priori. Therefore, we will need a pilot estimator $\hat{\bl}$.   In the following proposition, we study how randomness of $\hat{\bl}$ affects the performance of the LUS estimator when the model is misspecified.

\begin{prop}\label{prop:mis2}
Assume that the pilot estimator $\hat{\bl}$ is such that $\sqrt{n}(\hat{\bl}-\bt^*)\xrightarrow{d}\mathcal{N}(\bm{0},\mathcal{V}_{\lambda})$ for some $\mathcal{V}_{\lambda}>0$.  Set the acceptance probability acceptance probability according to Eqs.~(\ref{eq:a1}) and (\ref{eq:a2}).
Under the same conditions in Proposition~\ref{prop:mis1}, when $K=2$ and under model misspefication, we have
{\small
\[
\sqrt{n}(\hat{\bt}_{LUS}-\bt^*)\xrightarrow{d}
\mathcal{N}\left(\bm{0}, \H(\bt^*,\bt^*)^{-1}[J(\bt^*,\bt^*)+C(\bt^*,\bt^*)\mathcal{V}_{\lambda}C(\bt^*,\bt^*)^{\top}]\H(\bt^*,\bt^*)^{-1}\right),
%\mathcal{N}\left(\bm{0}, \H^{-1}(J+CVC^{\top})\H^{-1}\right),
\]
}
where
{\small
\[
\H(\bl,\bt)=-\nabla_{\bt}^2R(\bl,\bt), \ 
J(\bl,\bt)=Var(\sqrt{n}\nabla_{\bt}R(\bl,\bt)), \ 
C(\bl,\bt)=\frac{\partial^2R(\bl,\bt)}{\partial\bt\partial\bl}.
\]
}
\end{prop}
\noindent
Proposition \ref{prop:mis2} implies that as long as the pilot estimator is consistent to $\bt^*$, $\hat{\bt}_{LUS}$ is consistent to $\bt^*$. 
Moreover, under model misspecification, we see that there is an inflation in the asymptotic variance induced by the random pilot estimator $\hat{\bl}$. 
From Propositions \ref{prop:mis1} and \ref{prop:mis2}, when the model is misspecified for $K=2$, the LUS method has the same properties as those of the LCC method proposed in \cite{fithian2014local}.  

%Unfortunately, Propositions \ref{prop:mis1} and \ref{prop:mis2} can not be generalized to the case of $K>2$. 
However, when $K>2$, the LUS estimator is biased.  
%In particular,  Proposition~\ref{prop:mis3} shows that $\bT^*$ is not the maximizer of $R(\bT^*,\bT)$. 
\begin{prop}\label{prop:mis3}
Under model misspecification, when $K>2$, the LUS estimator is biased even when $\bl=\bT^*$.  However, if there exists a class $k$ such that $p(\x,\bt_k^*)\geq0.5$, then $\bt_k^* 
={\arg\max}_{\bt_k}~ R(\bT^*,\bT_{-k}^*)$ where $\bT_{-k}^*$ is the parameter by setting $\bt_i=\bt_i^*$ for $i\neq k$. 
\end{prop}
Proposition~\ref{prop:mis3} implies that when $\bl=\bT^*$, if there exists a majority class such that $p(\x,\bt^*_k)\ge 0.5$, then the LUS estimator with respect to the $k$-th class will be consistent to $\bt_k^*$ when the other parameters $\bt_i$ are fixed as $\bt_i^*$ for $i\neq k$.
   Explicit characterizations of the bias for the LUS estimator when $K>2$ are complex and we put them in Appendix~\ref{appendix:bias}.

Nevertheless, in our empirical studies, we find that the LUS method works well for both binary classification and multi-class classification problems, even under model misspecification.

%%%%%%%%%%%%%%%%%%%%%%%%%%%%%%%%
% Local Uncertainty Sampling Algorithm
%%%%%%%%%%%%%%%%%%%%%%%%%%%%%%%%
\section{Local Uncertainty Sampling Algorithm}\label{sec:algo}

In order to apply the sampling strategy in Theorem \ref{thm:LUS} empirically, according to Corollary \ref{cor:pilot}, the main idea is to employ a rough but consistent estimate $\tilde{p}(\x,k)$ of the probabilities $p(\x,k)$ given $\x$, and then assign the acceptance probability according to $\tilde{p}(\x,k)$. In fact, we only need to have an approximate estimate of $q(\x)$ according to Eq. \ref{eq:q}; that is, we only need to roughly estimate the probability of the majority class given $\x$. 
Note that $\tilde{p}(\x,k)$ can be obtained via a consistent pilot estimator $\bm{\lambda}$ (which may be obtained using a small amount of uniformly sampled data) where $\tilde{p}(\x,k)= e^{f(\x,\bm{\lambda}_k)}/(1+\sum_{k'}^{K-1} e^{f(\x,\bm{\lambda}_{k'})})$. This is similar to what is employed in \cite{fithian2014local}. 
However, in practice, one may also use a simpler model family to obtain the pilot estimate, as shown in our MNIST experiment below, or one may use other techniques such as neighborhood based methods \cite{atkeson1997locally} for this purpose. 

In real-world applications, this rough estimate is often easy to obtain. For example, when data arrives in time sequence, a pilot estimator trained on previous observations can be used for fitting a new model when new observations are arriving. Moreover, a rough estimate obtained on a small subset of the full population can be used for training on the entire dataset. In our experiments, we adopt the latter: a small uniformly subsampled subset of the original population is used to obtain the rough estimate $\tilde{p}(\x,k)$.
As we will see later, this choice is sufficient for our method to obtain good practical performance. The LUS algorithm can be described in Algorithm~\ref{algo:LCCMC}.

\begin{algorithm}[t]
\caption{The LUS Algorithm for Multi-Class Logistic Regression.}
\label{algo:LCCMC}
\begin{algorithmic}[1]
\STATE Choose a desired $\gamma \geq 1$.
\STATE Given a rough estimate $\tilde{\bm{p}}_i=(\tilde{p}_{i,1},\ldots, \tilde{p}_{i,K})^\top$ for each data point $\x_i$, where $\tilde{p}_{i,k}$ is a roughly estimated probability of $\x_i$ belonging to class $k$.
\STATE Scan the data once and generate the random variables $z_i\sim\text{Bernoulli}(a(\x_i, y_i):z_i=1)$ based on the acceptance probability $a(\x_i, y_i)$ defined as
\[
a(\x_i,y_i)=
\left\{
\begin{array}{ll}
			\frac{1-\tilde{q}_i}{\gamma-\max(\tilde{q}_i,0.5\gamma)}, & \text{if } \tilde{p}_{i, y_i}=\tilde{q}_i\geq 0.5 \\
			\min(1,2\tilde{q}_i/\gamma), & \text{otherwise} \\
		\end{array}
		\right. ,
\]
where
$\tilde{q}_i=\max\big(0.5, \tilde{p}_{i,1},\ldots, \tilde{p}_{i,K}\big)$.
\STATE Fit a multi-class logistic regression model to the subsample set $\{(\x_i,y_i):z_i=1\}$ with the model function $g$ defined in Eq. (\ref{eq:logodds}):
\begin{align}
	\hat{\bT}_{LUS}=\arg\max_{\bT}\sum_{i=1}^n z_i\left(\sum_{k=1}^{K-1}\mathbb{I}(y_i=k)\cdot g(\x_i,\bt_k)-\log\left(1+\sum_{k=1}^{K-1}e^{g(\x_i,\bt_k)}\right)\right).
\end{align}
\STATE Output $\hat{\bT}_{LUS}$.
\end{algorithmic}
\end{algorithm}

\section{Experiments}
In this section, we evaluate the performance of the LUS method and compare it with the uniform sampling (US) and case-control (CC) sampling methods on both simulated and real-world datasets. For the CC sampling method, we extend the standard CC sampling considered in the binary classification problem to multi-class case by sampling equal number of data points for each class. Under marginal imbalance, if some minority classes do not have enough samples, we keep all data for those classes and subsample equal number of the remaining data points from other classes.
In addition, we also compare the LUS and LCC methods on the Web Spam dataset, which is a binary classification problem studied in \cite{fithian2014local}. The experiments are implemented on a single machine with 2.2GHz quad-core Intel Core i7 and 16GB memory.

\subsection{Simulation: Marginal Imbalance}\label{sec:simulation1}
We first simulate the case when the data is marginally imbalanced. We generate a 3-class Gaussian model according to $(\bm{X}|Y=k)\sim\mathcal{N}(\bm{\mu}_{k},\bm{\Sigma}_k)$, which is the true data distribution $\D$. We set the number of features as $d=20$, $\bm{\mu}_1=[\underbrace{1,1,\ldots,1}_{10}, \underbrace{0,0,\ldots,0}_{10}]^\top$, $\bm{\mu}_2=[\underbrace{0,0,\ldots,0}_{10}, \underbrace{1,1,\ldots,1}_{10}]^\top$, and $\bm{\mu}_3=[\underbrace{0,0,\ldots,0}_{20}]^\top$. The covariance matrices for classes $k=1,2,3$ are assigned to be the same, i.e.,  $\bm{\Sigma}_1=\bm{\Sigma}_2=\bm{\Sigma}_3=\mathbf{I}_d$, where $\mathbf{I}_d$ is a $d\times d$ identity matrix. 
So the true log-odds function $f$ is linear and 
we use a linear model to fit the simulated data, 
i.e., the model is correctly specified. 
Moreover, we set $\P(Y=1)=0.1$, $\P(Y=2)=0.8$, $\P(Y=3)=0.1$, which implies that the data is marginally imbalanced and the second class dominates the population.

Since the true data distribution $\D$ is known in this case, we directly generate the full dataset from the distribution $\D$. For the full dataset, we generate $n=50,000$ data points. The entire procedure is repeated for 200 times to obtain the variance of different estimators. For the LUS method, we randomly generate $n_{pilot}=5000$ data points (i.e., an amount of 10\% of the full data) from $\D$ to obtain a pilot estimator in \emph{every} repetition. Moreover, we generate another $n_{test}=100,000$ data points to test the prediction accuracy of different methods.

Recall that $\gamma$ controls the desired variance of the LUS estimator according to Theorem \ref{thm:LUS}. In the following experiments, we will test different values of $\gamma=\{1.1,1.2,\ldots,1.9,2,3,4,5\}$, respectively. Given the value of $\gamma$, suppose the LUS method will subsample a number of $n_{Sub}$ data points. Then, we let the US and CC sampling methods select an amount of $n_{Sub}+n_{pilot}$ examples to achieve fair comparison, because the LUS method has to pay for its usage of a random pilot estimate.

\begin{figure*}[!ht]
\centering
\subfigure[$\gamma=1.1$]{
\includegraphics[scale=0.1]{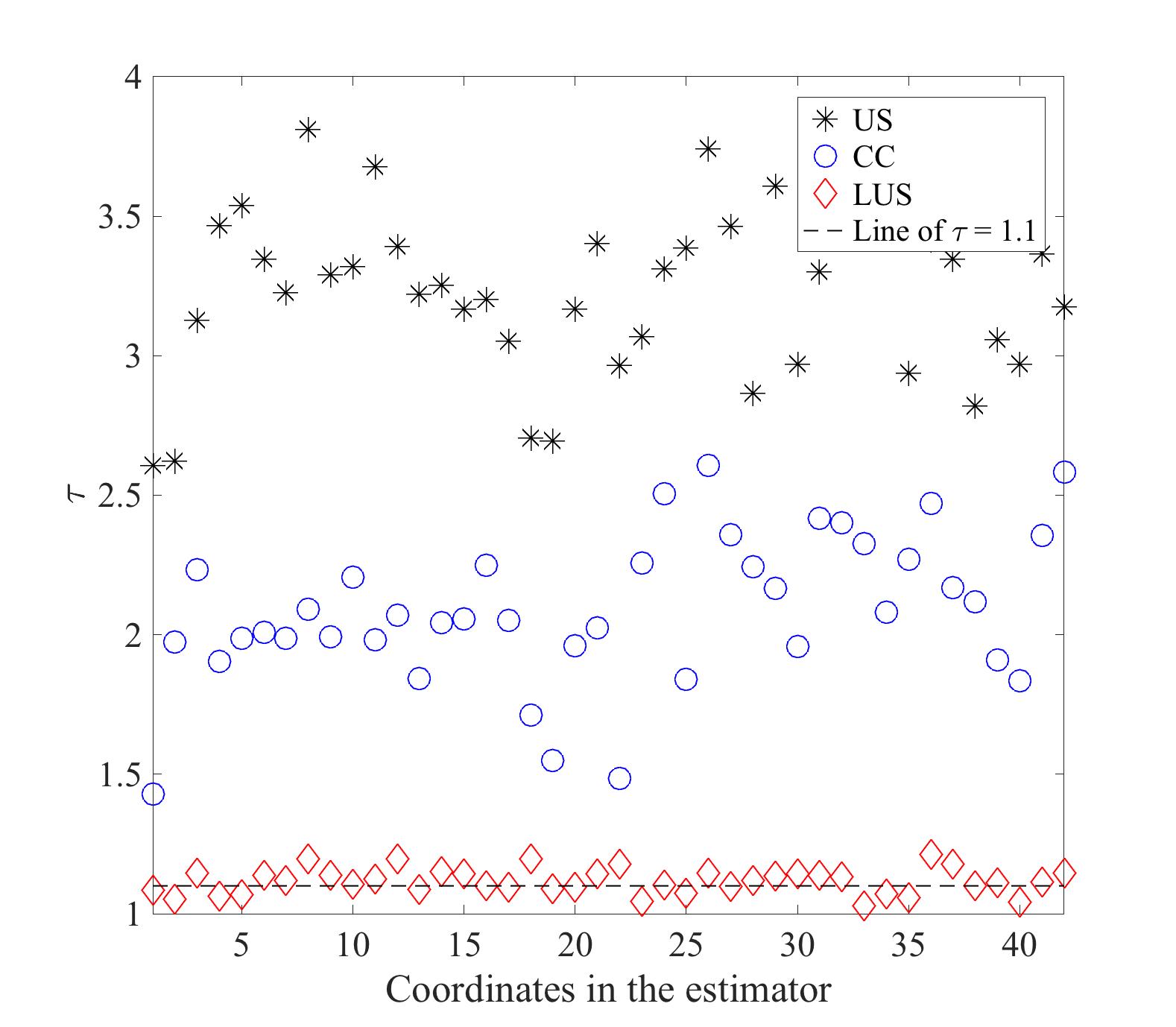}
}\hskip -0.2in
\subfigure[$\gamma=2$]{
\includegraphics[scale=0.1]{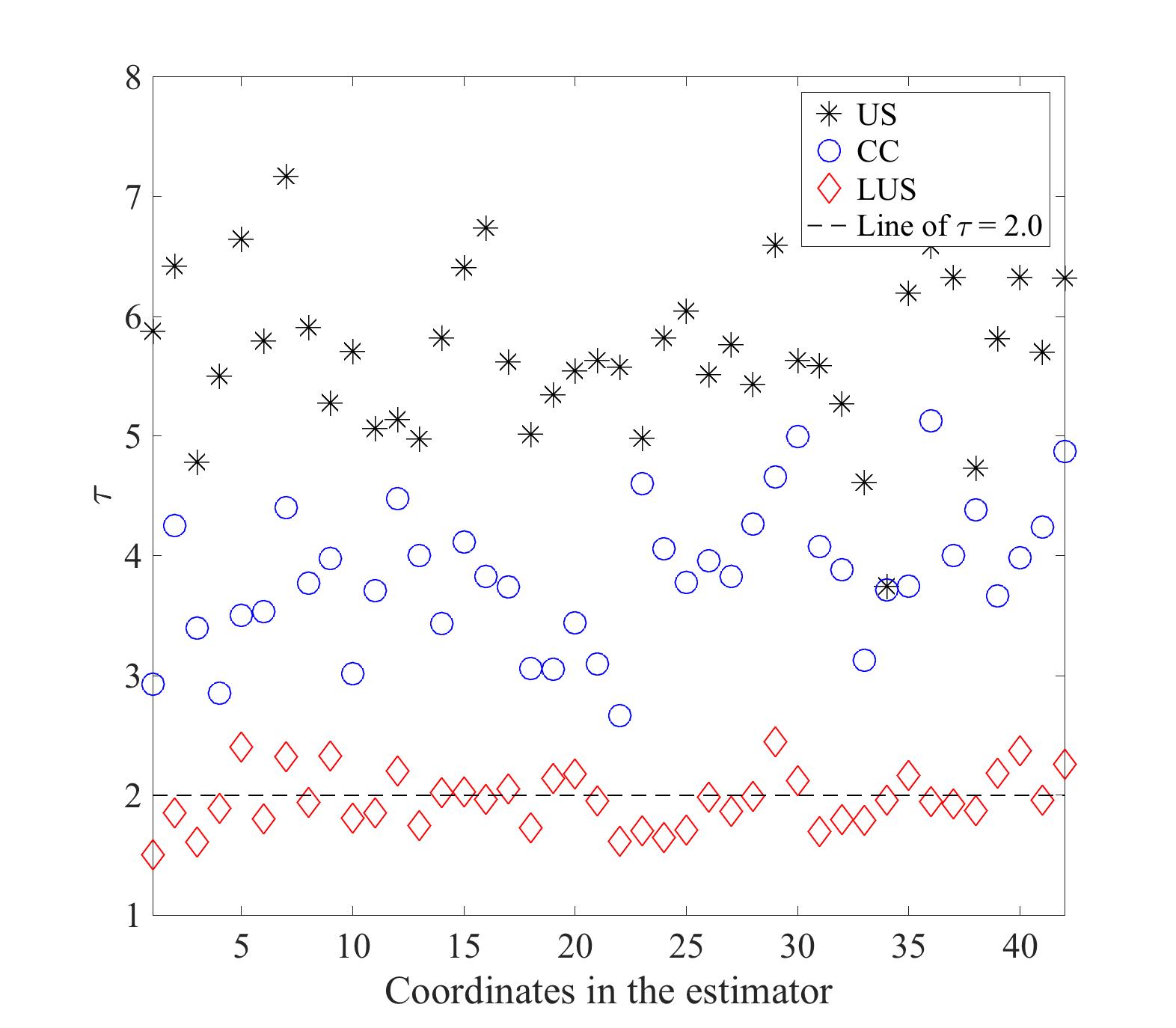}
}\hskip -0.2in
\subfigure[$\gamma=3$]{
\includegraphics[scale=0.1]{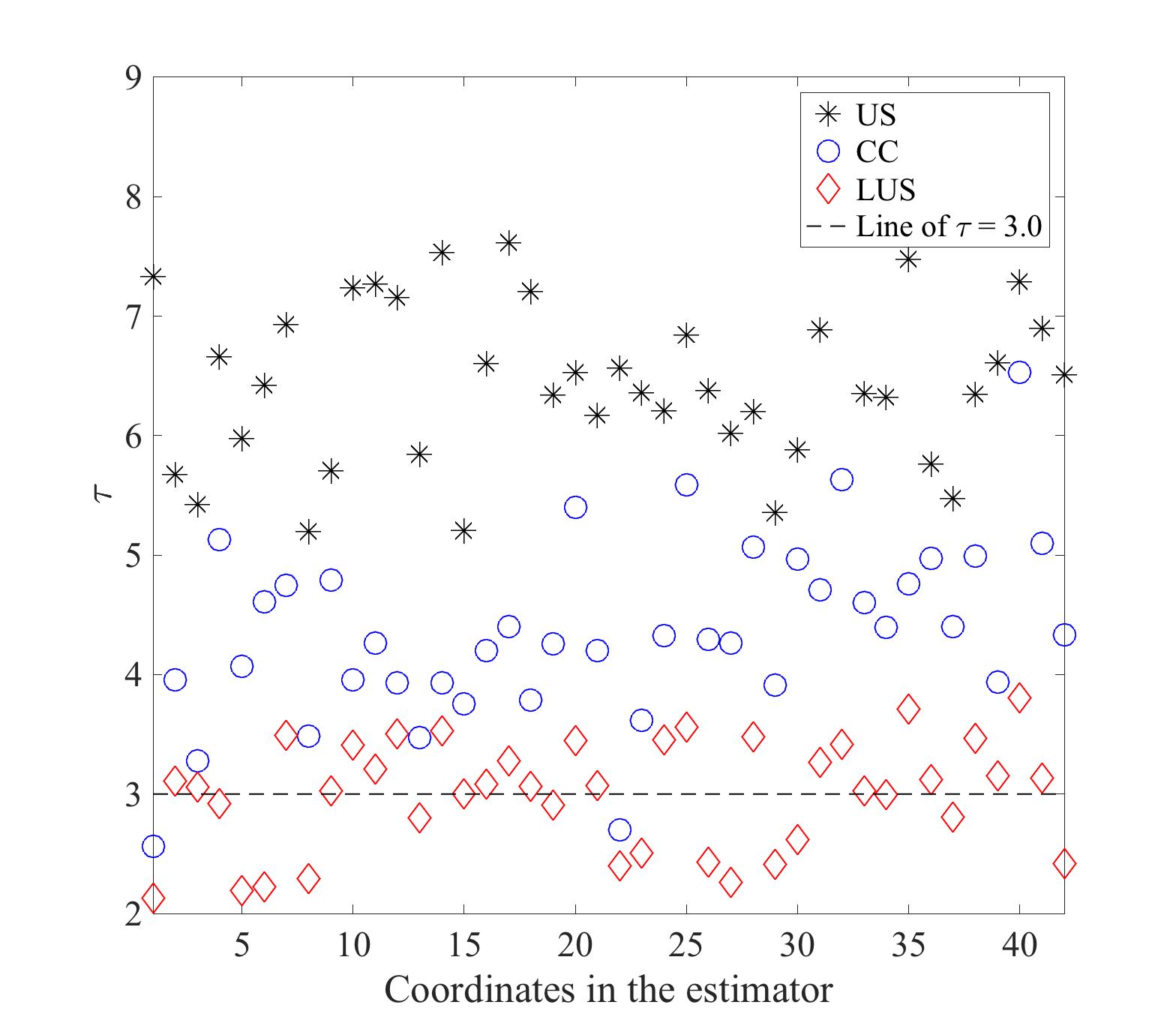}
}\vskip -0.1in
\caption{The $\tau$ value for each coordinate under different values of $\gamma$. $\tau$ denotes the ratio between the variance of each coordinate in the subsampling based estimator and the variance of the coordinate in the full-sample based MLE, i.e., $\tau=\text{Var}(\hat{\theta}_{Sub}) / \text{Var}(\hat{\theta}_{full})$.}
\label{fig:toY_imbalance1}
\end{figure*}

\begin{figure*}[!ht]
\centering
\subfigure[Average $\tau$ v.s. $\gamma$]{
\includegraphics[scale=0.38]{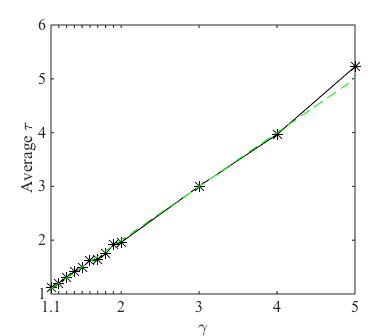}
}\hskip -0.15in
\subfigure[$n_{Sub}/n$ v.s. $\gamma$]{
\includegraphics[scale=0.38]{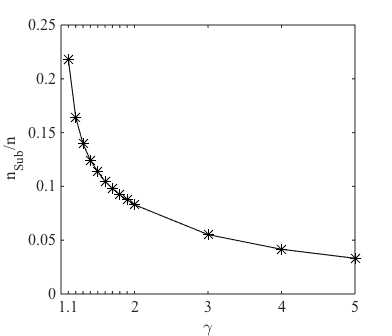}
}\hskip -0.1in
\subfigure[Accuracy v.s. $\frac{n_{Sub}+n_{pilot}}{n}$]{
\includegraphics[scale=0.38]{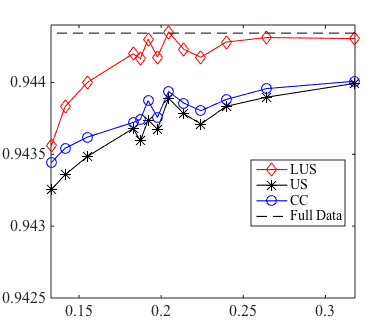}
}\vskip -0.1in
\caption{Plots in the first simulation.}
\label{fig:toY_imbalance2}
\end{figure*}

Since $\bt_k\in\R^{d}$ $(k=1,\ldots,K-1)$ in this case and there is an additional intercept parameter, the estimator contains a total number of $(d+1)(K-1)$ coordinates. Denote $\tau$ as the coordinate-wise ratio between the variance of the coordinate in the subsampling based estimator and the variance of the coordinate in the full-sample based MLE. We show the $\tau$ value for each coordinate under different values of $\gamma=\{1.1, 2,3\}$ in Fig. \ref{fig:toY_imbalance1}.
 In this simulation, there are $42$ coordinates. From the figures, we observe that the $\tau$ value for each coordinate of the LUS method is  around $\gamma$, which matches our theoretical analysis in Theorem \ref{thm:LUS}. On the other hand, the variances of the US and CC sampling methods are much higher than that of the LUS method, even when US and CC sample $n_{pilot}$ more data points than the LUS method.

In Fig. \ref{fig:toY_imbalance2}(a), we plot the relationship between the average $\tau$ for all coordinates against $\gamma$.  From the figure, we observe that the relationship is close to the $y=x$ line (the dashed green line), which implies that $\tau$ approximately equals $\gamma$. These experimental results well match our theoretical analysis. Fig. \ref{fig:toY_imbalance2}(b) reports the relationship between $n_{Sub}/n$ and $\gamma$. Fig. \ref{fig:toY_imbalance2}(c) shows the relationship between the prediction accuracy on the test data and the proportion of used training data $(n_{Sub}+n_{pilot})/n$. From the figure, when $(n_{Sub}+n_{pilot})/n$ decreases, the prediction accuracy of all the methods decreases, while the LUS method shows much slower degradation compared to the US and CC methods. Moreover, according to Fig. \ref{fig:toY_imbalance2}(c), we only need about $20\%$ of the full data (including those used for computing the pilot estimator) to achieve the same prediction accuracy as the full-sample based MLE, implying that the LUS method is very effective for reducing the computational cost while preserving high accuracy.

\subsection{Simulation: Marginal Balance}\label{sec:simulation2}

In this section, we generate marginally balanced data with conditional imbalance. Under this situation, the CC sampling method is identical to US, and hence we omit the CC sampling method from our comparison. The settings are exactly the same as those in the previous simulation, except that we let $\P(Y=1)=\P(Y=2)=\P(Y=3)=\frac{1}{3}$, which implies that the data is marginally balanced. 
However, as we will see later, this simulated data is conditionally imbalanced.

\begin{figure*}[!ht]
\centering
\subfigure[$\gamma=1.1$]{
\includegraphics[scale=0.1]{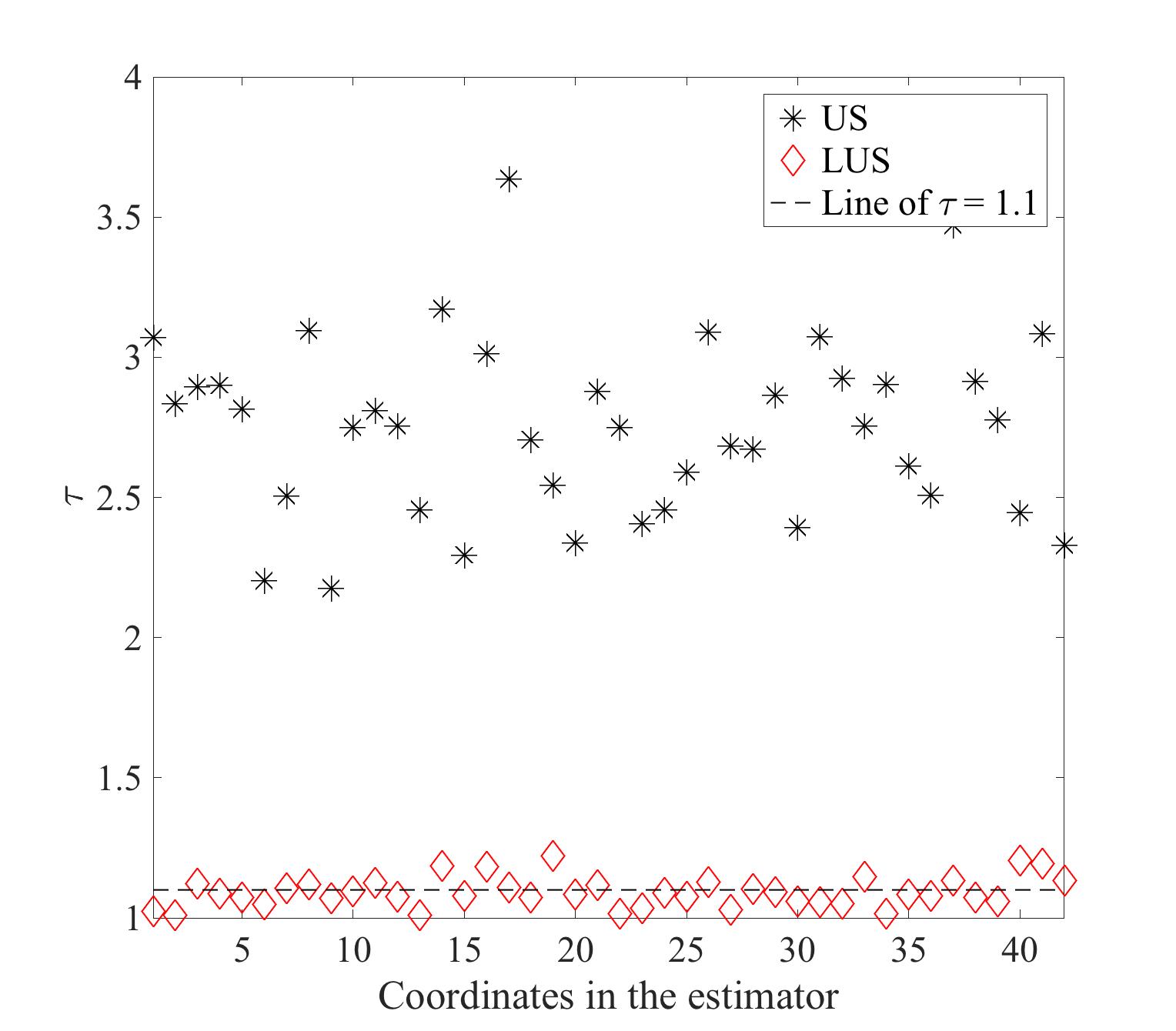}
}\hskip -0.2in
\subfigure[$\gamma=2$]{
\includegraphics[scale=0.1]{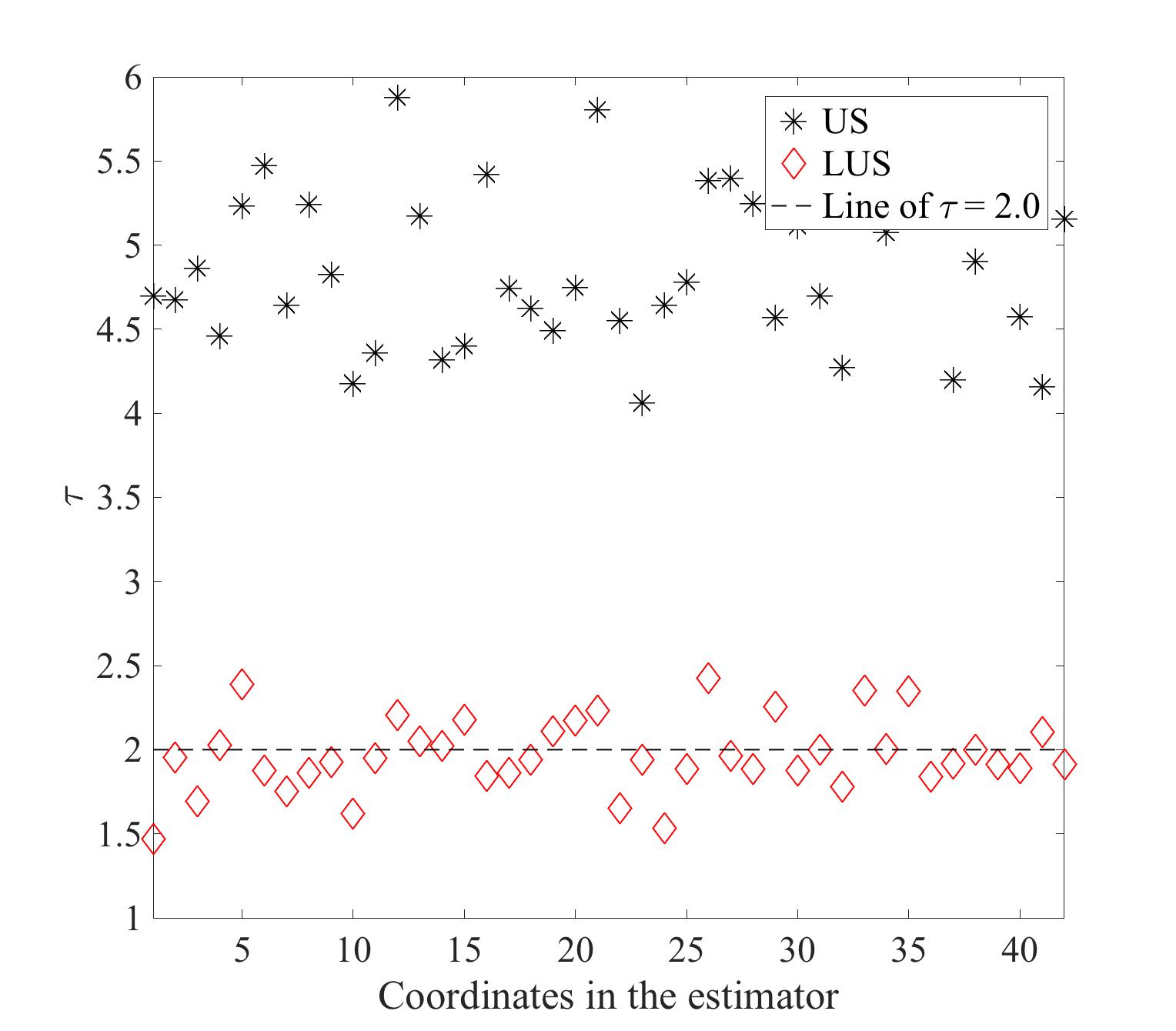}
}\hskip -0.2in
\subfigure[$\gamma=3$]{
\includegraphics[scale=0.1]{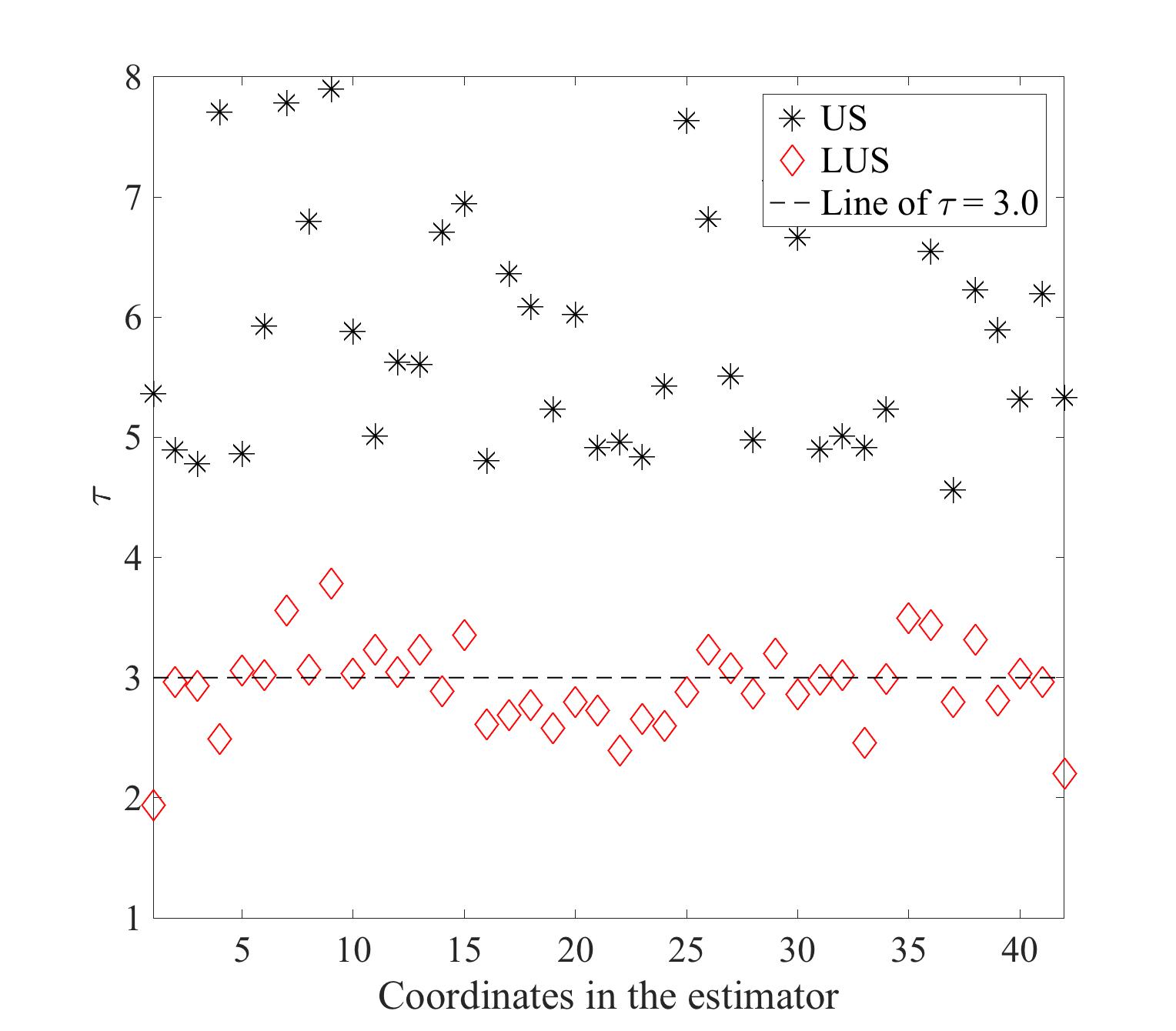}
}\vskip -0.1in
\caption{The $\tau$ value for each coordinate under different values of $\gamma$. $\tau=\text{Var}(\hat{\theta}_{Sub}) / \text{Var}(\hat{\theta}_{full})$.}
\label{fig:toy_balance1}
\end{figure*}

\begin{figure*}[!ht]
\centering
\subfigure[Average $\tau$ v.s. $\gamma$]{
\includegraphics[scale=0.38]{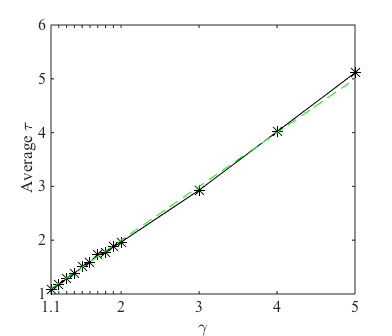}
}\hskip -0.15in
\subfigure[$n_{Sub}/n$ v.s. $\gamma$]{
\includegraphics[scale=0.38]{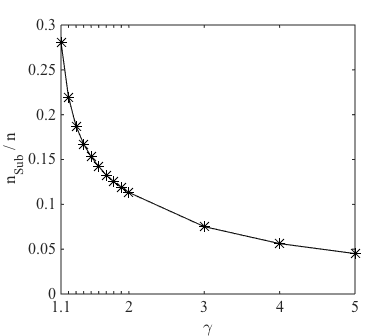}
}\hskip -0.1in
\subfigure[Accuracy v.s. $\frac{n_{Sub}+n_{pilot}}{n}$]{
\includegraphics[scale=0.38]{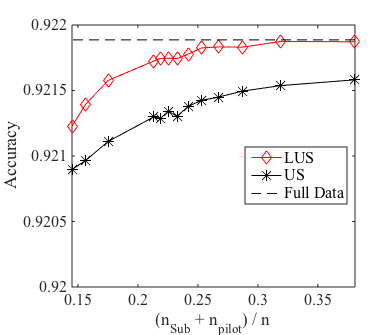}
}\vskip -0.1in
\caption{Plots in the second simulation.}
\label{fig:toy_balance2}
\end{figure*}

The $\tau$ value for each coordinate when $\gamma=\{1.1,2,3\}$ is shown in Fig. \ref{fig:toy_balance1}. The relationship between the average $\tau$ for all the coordinates and $\gamma$ is plotted in Fig. \ref{fig:toy_balance2}(a). Fig. \ref{fig:toy_balance2}(b) reports the relationship between $(n_{Sub}+n_{pilot})/n$ and $\gamma$. In Fig. \ref{fig:toy_balance2}(c), we show the relationship between the prediction accuracy on the test data and the proportion of used training data $(n_{Sub}+n_{pilot})/n$. The results are similar to those of the previous simulation and demonstrate the effectiveness of the LUS method under the marginally balanced (but conditionally imbalanced) case. Fig. \ref{fig:toy_balance2}(c) suggests that we only need about $25\%$ of the full data (including those used for computing the pilot estimator) to achieve the same prediction accuracy as that of the full-sample based MLE.

In the simulations, we have fixed the amount of data used for computing the pilot estimate (as $\frac{n_{pilot}}{n}=10\%$). In general, increasing or decreasing this amount will reduce or increase the variance of the pilot estimate and affect the performance of the LUS estimator accordingly. We provide simulations to show how the performance of the LUS estimator changes with respect to $n_{pilot}$ in Appendix \ref{appendix:pilot_amount}, and the results suggest that a small $n_{pilot}$ is sufficient to let LUS achieve good performance. Another trick with respect to the pilot estimate is to use two or more phases for the pilot estimate. That is, we can recursively apply an obtained LUS estimator as the pilot estimator for the next round. Since the variance of the LUS estimator is smaller than that of the random sampling based estimator (when they subsample the same amount of data), using multiple phases will reduce the variance of the final estimator compared with using random sampling completely.

Despite of these flexibilities on choosing the pilot estimate, for simplicity, we will use random sampling and fix the portion of data for the pilot estimate in the following experiments.

\subsection{MNIST Data}
In this section, we evaluate different methods on the MNIST data\footnote{\url{http://yann.lecun.com/exdb/mnist/}}, which is a benchmark dataset in image classification problems and the state-of-the-art results have achieved less than 1\% test error rate on this dataset. Therefore, the classification accuracy of this problem is high. Note that different from the LCC sampling, our LUS method can handle general logistic models. In this experiment, we let the model function $f$ for the LUS estimator to be one of the state-of-the-art deep neural networks. Since we have no knowledge about the underlying true model in this real dataset, the adopted neural network might be misspecified. In order to save computational cost, we use a simpler neural net structure to obtain the pilot estimate $\tilde{p}(\x,k)$. It is worth mentioning that this simpler neural network is different from the one used for the final LUS estimator, because the LUS method only requires a rough estimate $\tilde{p}(\x,k)$ instead of an explicit estimator which is needed by post-estimation correction based methods. The detailed network structures and parameter settings are provided in Appendix~\ref{appendix:neural}. For the US method, we apply the same network structure used by the final LUS estimator to achieve fair comparison. Since the MNIST data is marginally balanced, the CC method performs the same with the US method and we omit its comparison here.

The training set consists of 60,000 images and the test set has 10,000 images. We uniformly select $n_{pilot}=6000$ data points (i.e., 10\% of the training data) to compute the rough estimate $\tilde{p}(\x,k)$ in \emph{every} repetition and perform 10 repetitions of the experiment to obtain the average performance of different methods. Similar to the simulations, we assume the LUS method samples a number of $n_{Sub}$ data points and let the US method sample $n_{Sub}+n_{pilot}$ data points.
%for fair comparison.
Note that the setting is a bit unfair for LUS because the $n_{pilot}$ data points used for computing $\tilde{p}(\x,k)$ are processed by a simpler neural network. 
However, we still keep this protocol in the experiment.

\begin{figure*}[!ht]
\centering
\subfigure[$n_{Sub}/n$ v.s. $\gamma$]{
\includegraphics[scale=0.15]{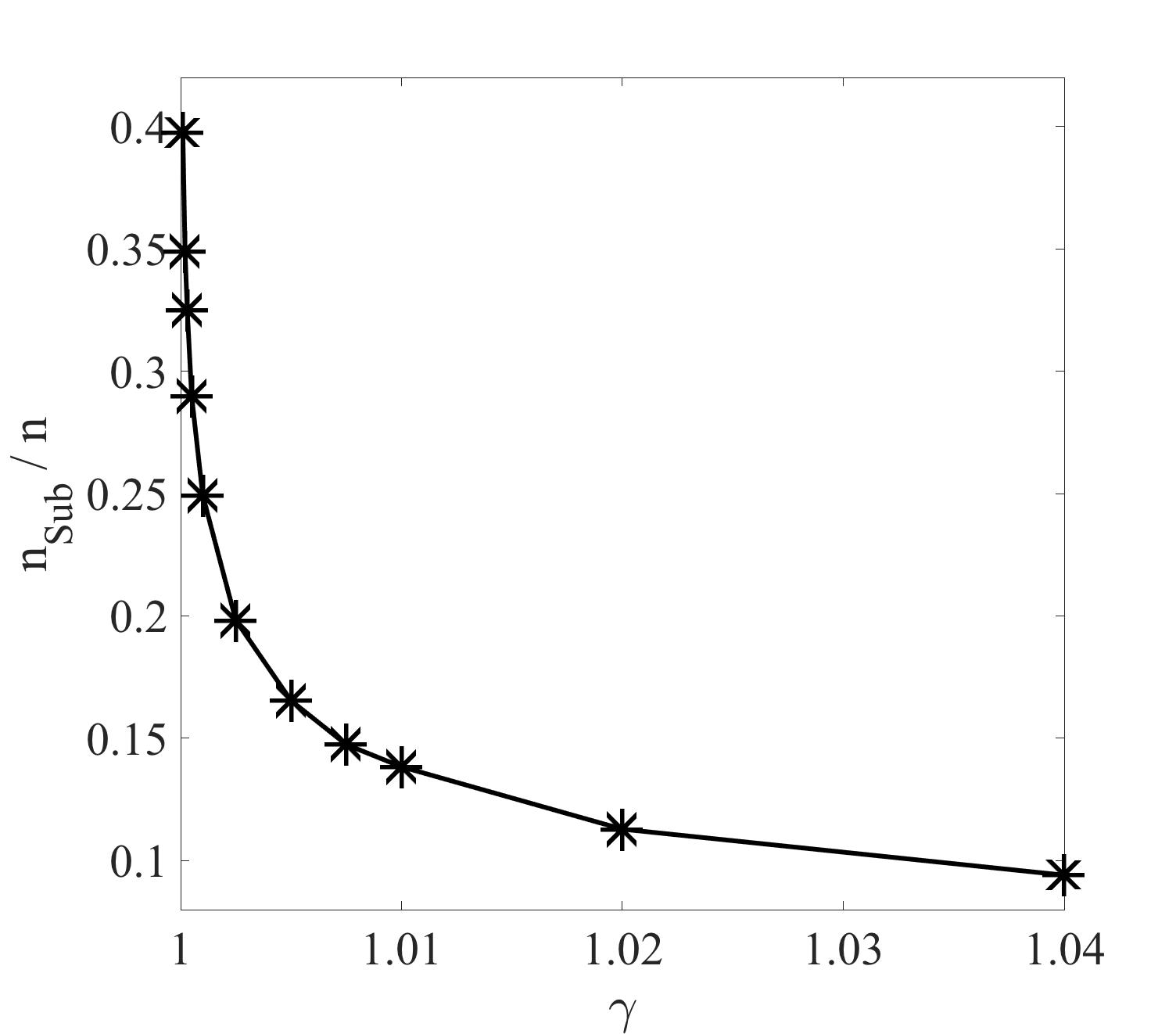}
}\hskip -0.1in
\subfigure[Test error v.s. $\frac{n_{Sub}+n_{pilot}}{n}$]{
\includegraphics[scale=0.15]{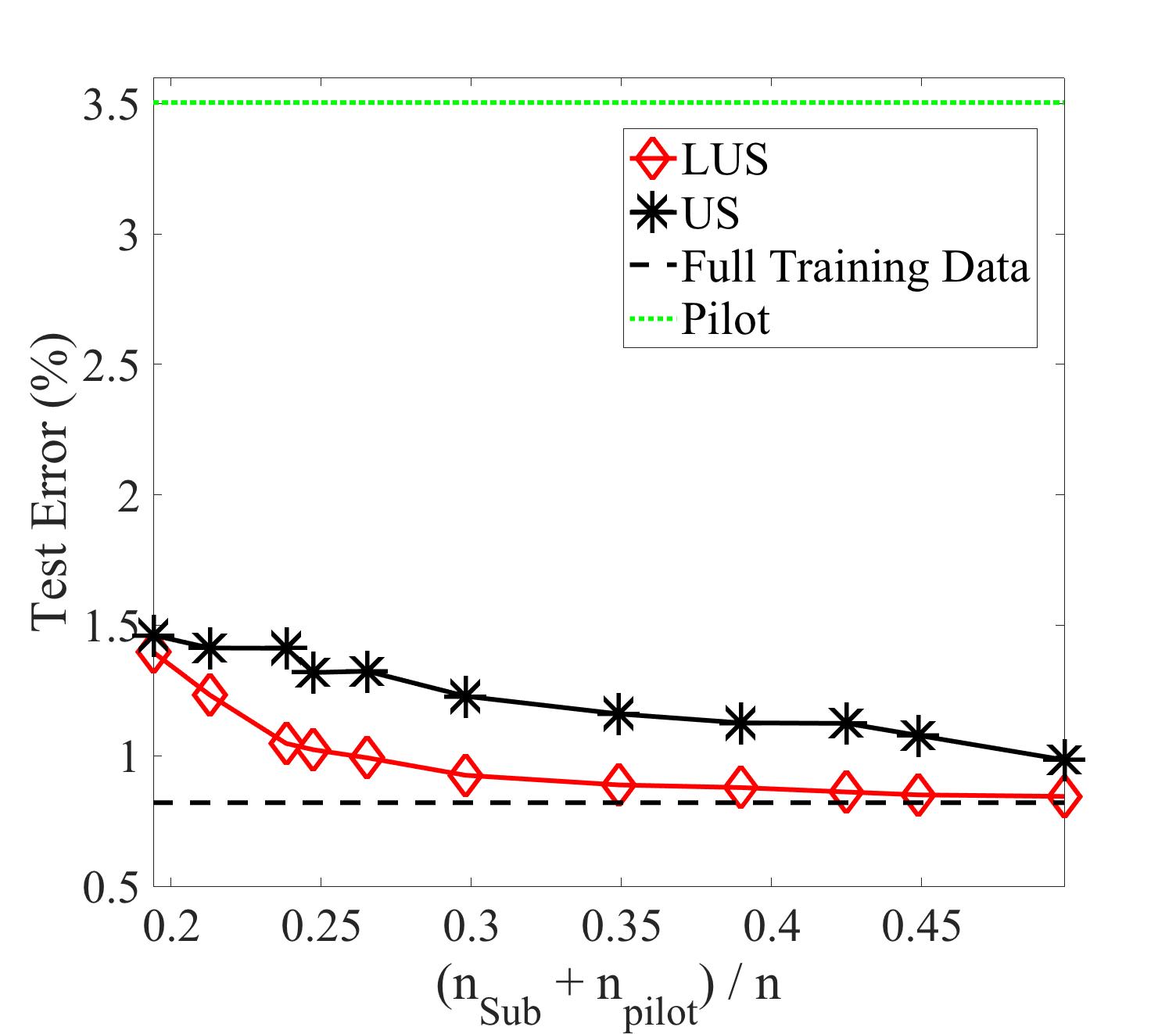}
}
\caption{Plots in MNIST data.}
\label{fig:mnist}
\end{figure*}

We test a number of values of $\gamma$ in the range $(1, 1.04]$ and Fig. \ref{fig:mnist}(a) plots the proportion of used data $(n_{Sub}+n_{pilot})/n$ against $\gamma$. Fig. \ref{fig:mnist}(c) shows the relationship between the test error (\%) and $(n_{Sub}+n_{pilot})/n$.
Note that the rough estimate has a relatively large error rate of about 3.5\%; this is due to the fact that it employs a simpler network structure to save computational cost. Nevertheless, the LUS method can achieve an error rate below 1\%
using only about 25\% of the training data (10\% for the pilot estimate and 15\% for LUS); with about 45\% of the training data (10\% for the pilot estimate and 35\% for LUS), it achieves the same error rate as that obtained by using the full training data. The LUS method consistently outperforms the US method. Table \ref{tb:mnist} shows the speedup of the LUS method compared to the full-sample based estimation.

 \begin{table}[!ht]
 \centering
 \caption{Speedup of the LUS method on MNIST data when using 45\% of the training set (10\% for the pilot estimate and 35\% for LUS), i.e., achieving the same error rate with the full-training-sample based MLE.}
 \begin{tabular}{c|cccccccccc}
   \hline
    & The pilot estimate   & LUS  & Full training data  & Speedup \\
   \hline
Seconds & 51.0 & 369.2 & 1115.1 & 2.7 \\
   \hline
 \end{tabular}
 \label{tb:mnist}
 \end{table}

\subsection{Web Spam Data: Binary Classification}

In this section, we compare the LUS method with the LCC method on the Web Spam data\footnote{\url{http://www.cc.gatech.edu/projects/doi/WebbSpamCorpus.html}}, which is a binary classification problem used in \cite{fithian2014local} to evaluate the LCC method. Since the comparison among the LCC, US and CC methods on this dataset has been reported in \cite{fithian2014local}, we do not repeat them here and focus on the comparison between the LUS and LCC methods. The Web Spam data contains 350,000 web pages and about 60\% of them are web spams. This data set is approximately marginally balanced, but it has been shown to have strong conditional imbalance in \cite{fithian2014local}. We adopt the same settings as described in \cite{fithian2014local} to compare the LUS and the LCC methods. That is, we use linear logistic model and select 99 features which appear in at least 200 documents, and the features are log-transformed. Note that $10\%$ of the observations are uniformly selected to obtain a pilot estimator as in \cite{fithian2014local}. Since we only have a single dataset, we follow \cite{fithian2014local} to uniformly subsample 100 datasets, each of which contains 100,000 data points, as 100 independent `full' datasets, and then repeat the experiments 100 times for comparison.
\begin{figure*}[!ht]
\centering
\subfigure[$\gamma=1.1$]{
\includegraphics[scale=0.25]{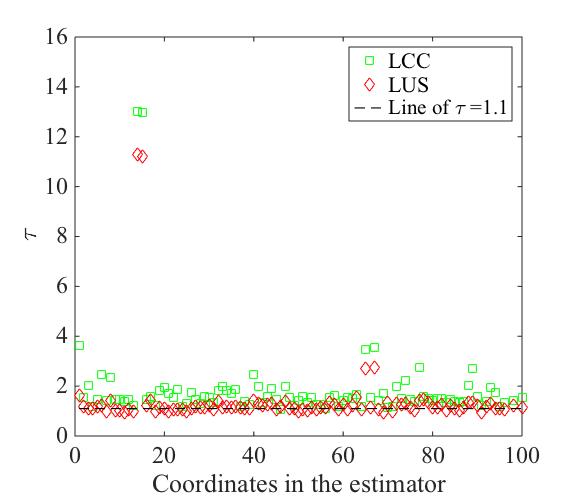}
}\hskip -0.2in
\subfigure[$\gamma=1.2$]{
\includegraphics[scale=0.25]{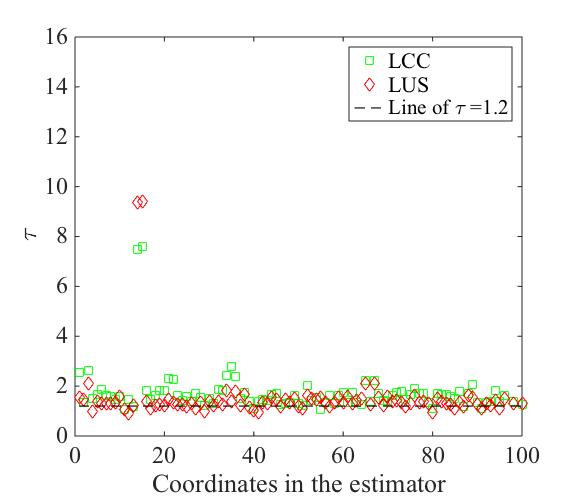}
}\hskip -0.2in
\subfigure[$\gamma=1.3$]{
\includegraphics[scale=0.25]{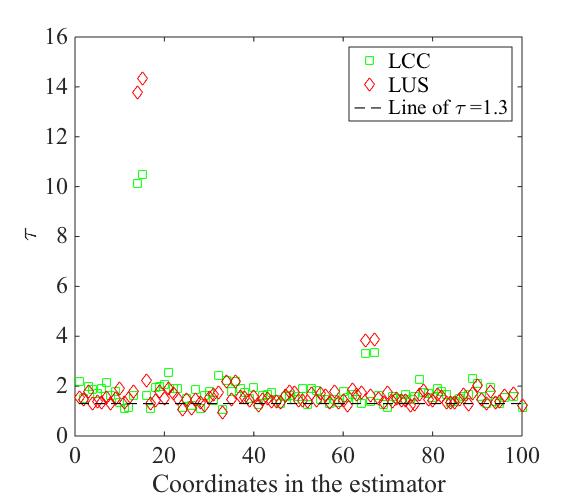}
}\vskip -0.1in

\subfigure[$\gamma=1.4$]{
\includegraphics[scale=0.25]{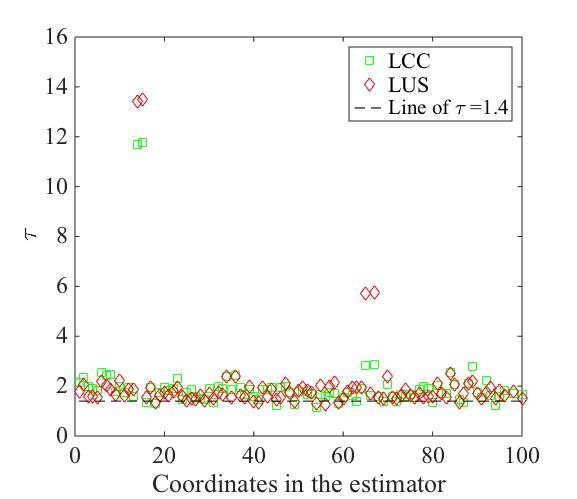}
}\hskip -0.2in
\subfigure[$\gamma=1.5$]{
\includegraphics[scale=0.25]{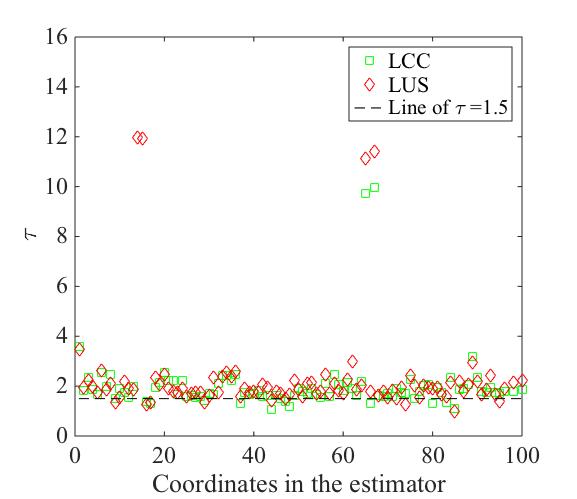}
}\hskip -0.2in
\subfigure[$\gamma=1.6$]{
\includegraphics[scale=0.25]{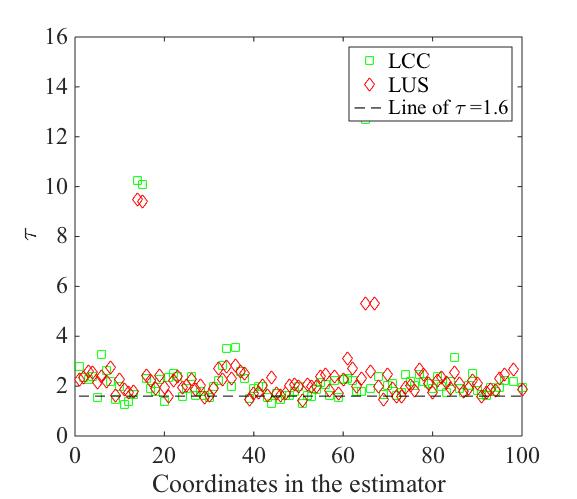}
}\vskip -0.1in

\subfigure[$\gamma=1.7$]{
\includegraphics[scale=0.25]{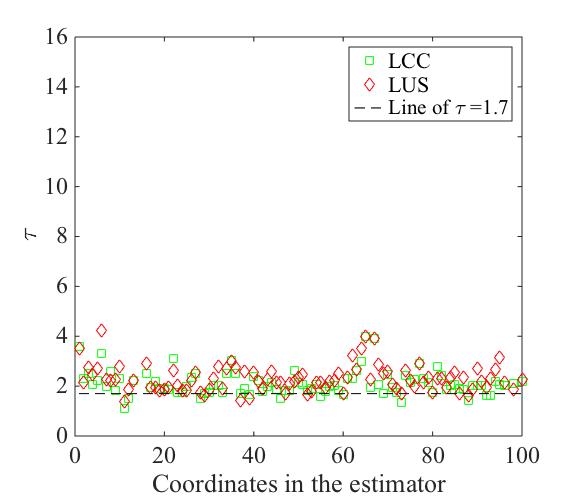}
}\hskip -0.2in
\subfigure[$\gamma=1.8$]{
\includegraphics[scale=0.25]{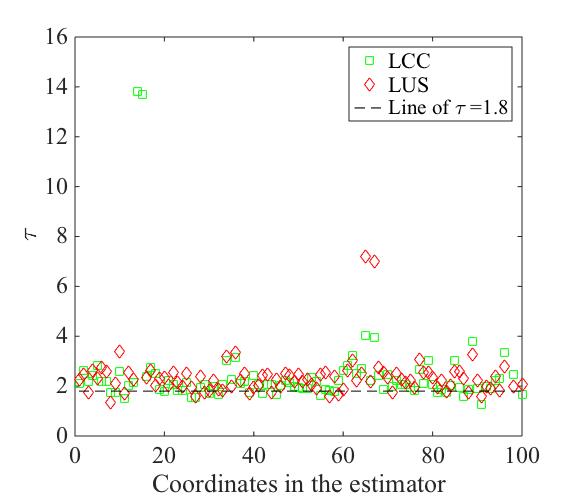}
}\hskip -0.2in
\subfigure[$\gamma=1.9$]{
\includegraphics[scale=0.25]{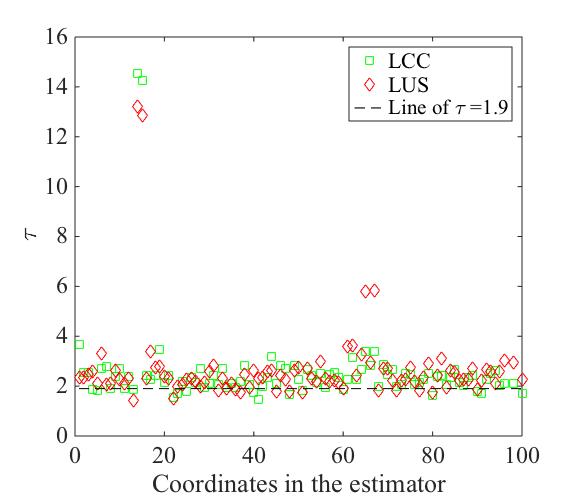}
}\vskip -0.1in

\caption{The $\tau$ value for each coordinate with respect to different values of $\gamma$ in the Web Spam data, where $\tau=\text{Var}(\hat{\theta}_{sub}) / \text{Var}(\hat{\theta}_{full})$.}
\label{fig:webspam1}
\end{figure*}

\begin{figure*}[!ht]
\centering
\subfigure[$n_{Sub}/n$ v.s. $\gamma$]{
\includegraphics[scale=0.4]{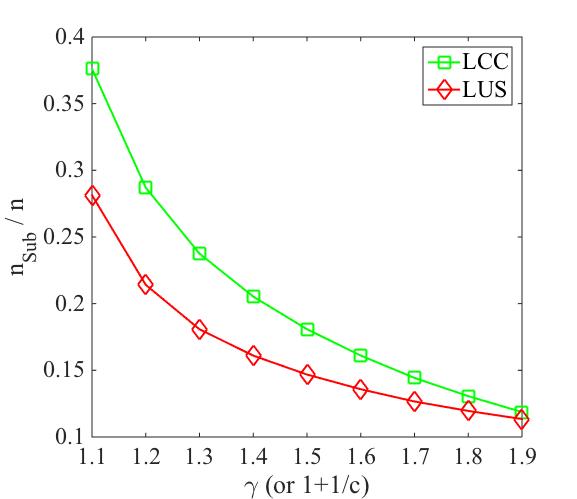}
}\vskip -0.1in

\caption{$n_{Sub}/n$ v.s. $\gamma$ (or $1+\frac{1}{c}$) in Web Spam data.}
\label{fig:webspam2}
\end{figure*}

Observe that when $\gamma\geq2$, the LUS and LCC methods are equivalent to each other by setting the parameter $c=\frac{1}{\gamma-1}\leq1$ of LCC in Section 3.3 of \cite{fithian2014local}. Therefore, we only focus on the case of $1\leq\gamma<2$. In the first experiment, similar to previous experiments, we test different values of $\gamma=\{1.1,1.2,\ldots,1.9\}$ and accordingly set $c=\{10, 5, \ldots, \frac{10}{9}\}$, so that the two methods have the same asymptotic variance. Then, we will compare the number of subsampled data points to see which method is more effective in terms of subsampled data size $n_{Sub}$.

Fig. \ref{fig:webspam1} plots the $\tau$ values for different choices of $\gamma$. As expected from the theoretical results, both the LUS and LCC methods have the same variance that is approximately $\gamma$ (or $1+\frac{1}{c}$) times variance of the full-sample based MLE. 
Next, we compare the subsampling proportion of different methods when $\gamma$ changes in Fig. \ref{fig:webspam2}. From the figure, the LUS method consistently subsamples fewer data points compared with LCC when they achieve the same variance as shown in Fig. \ref{fig:webspam1}.

Alternatively, we fix the proportion of the sampled examples $n_{Sub}/n$ for both LUS and LCC methods (by carefully choosing $\gamma$), and we test the variance of the estimators to see which one has lower variance. Table \ref{tb:webspam} shows the average variance of the coordinates in the LCC estimator and the LUS estimator. We observe that the LUS estimator always achieves lower variance compared to that of the LCC estimator.
These results demonstrate that the LUS method is not only theoretically better justified but also more effective than the LCC meethod in practice (for the case of $\gamma\in[1,2)$).

\begin{table}[!ht]
 \centering
 \caption{Average variance of the coordinates in the estimators from LCC and LUS.}
 \begin{tabular}{c|cccccccccc}
   \hline
$n_{Sub}/n$  & 0.15   & 0.2  & 0.25  & 0.3 \\
   \hline
LCC & 2.2511$\pm$1.9586 & 1.8691$\pm$1.3655 & 1.8270$\pm$1.4078 & 1.7303$\pm$1.1001 \\
LUS & 2.1108$\pm$2.0495 & 1.5608$\pm$1.1668 & 1.6017$\pm$1.5674 & 1.6196$\pm$1.7184 \\
   \hline
 \end{tabular}
 \label{tb:webspam}
 \end{table}

\section{Conclusion}
This paper introduced a general subsampling method for solving large-scale multi-class logistic regression problems. We investigated the asymptotic variance of the proposed estimator when the model is correctly specified. Based on the theoretical analysis, we proposed an effective sampling strategy called Local Uncertainty Sampling to achieve any given level of desired variance. We proved that the method always achieves lower variance than random subsampling for a given expected sample size, and the improvement may be significant under the favorable condition of strong conditional imbalance. Therefore the method can effectively accelerate the computation of large-scale multi-class logistic regression in practice. 
Empirically, we will need a pilot estimate of the probability to setup the acceptance probability. We proved that the variance of the proposed estimator remains a constant which is independent of the randomness of this pilot estimate as long as it is consistent.

We also studied the case of model misspecification. We showed that for binary classification problems ($K=2$), the proposed method can generate a consistent estimator (to the best estimator of the misspecified model) if the pilot estimator is consistent. For $K>2$, the proposed estimator is biased and we also provided analysis to quantify the bias.

The empirical studies support the theory and demonstrate that the local uncertainty sampling method outperforms the uniform sampling, case-control sampling and the local case-control sampling methods under various settings. By using the proposed method, we are able to select a very small subset of the original data to achieve the same performance as that of  the full dataset, which provides an effective mean for big data computation under limited resources. 

This work suggests several future directions. First, as we have mentioned at the end of Section \ref{sec:simulation2}, one can iteratively apply an obtained LUS estimator as the pilot estimator for the next round of fitting the model. This is closely related to the boosting method \cite{friedman2000additive} and a deep discussion on this relationship would be of great interest. Second, considering a situation of online learning with limited budget, the LUS method would likely provide an effective sampling strategy for this problem. Moreover, in high-dimensional settings, where sparse models, e.g., Lasso and Group Lasso, are widely adopted, it would be interesting to extend the LUS method to deal with regularized multi-class logistic regression with special considerations on high-dimensional asymptotic regime.

%\section*{Acknowledgements}
%We would like to thank the reviewers for their valuable comments.
%This research was partially supported by NSF IIS-1250985, NSF IIS-1407939, NSF DMS-1811315, and NIH R01AI116744.

\bibliographystyle{unsrt}
\bibliography{paper}

\begin{thebibliography}{10}

\bibitem{cortes2008sample}
Corinna Cortes, Mehryar Mohri, Michael Riley, and Afshin Rostamizadeh.
\newblock Sample selection bias correction theory.
\newblock In {\em Algorithmic Learning Theory}, pages 38--53. Springer, 2008.

\bibitem{cortes2010learning}
Corinna Cortes, Yishay Mansour, and Mehryar Mohri.
\newblock Learning bounds for importance weighting.
\newblock In {\em Advances in Neural Information Processing Systems}, pages
  442--450, 2010.

\bibitem{dhillon2013new}
Paramveer Dhillon, Yichao Lu, Dean~P Foster, and Lyle Ungar.
\newblock New subsampling algorithms for fast least squares regression.
\newblock In {\em Advances in Neural Information Processing Systems}, pages
  360--368, 2013.

\bibitem{mineiro2013loss}
Paul Mineiro and Nikos Karampatziakis.
\newblock Loss-proportional subsampling for subsequent erm.
\newblock {\em arXiv preprint arXiv:1306.1840}, 2013.

\bibitem{xie1989logit}
Yu~Xie and Charles~F Manski.
\newblock The logit model and response-based samples.
\newblock {\em Sociological Methods and Research}, 17(3):283--302, 1989.

\bibitem{zadrozny2004learning}
Bianca Zadrozny.
\newblock Learning and evaluating classifiers under sample selection bias.
\newblock In {\em Proceedings of the International Conference on Machine
  Learning}, page 114. ACM, 2004.

\bibitem{zhang2000value}
Tong Zhang and F~Oles.
\newblock The value of unlabeled data for classification problems.
\newblock In {\em Proceedings of the International Conference on Machine
  Learning}, pages 1191--1198. Citeseer, 2000.

\bibitem{chawla2004editorial}
Nitesh~V Chawla, Nathalie Japkowicz, and Aleksander Kotcz.
\newblock Editorial: special issue on learning from imbalanced data sets.
\newblock {\em ACM Sigkdd Explorations Newsletter}, 6(1):1--6, 2004.

\bibitem{he2009learning}
Haibo He and Edwardo~A Garcia.
\newblock Learning from imbalanced data.
\newblock {\em IEEE Transactions on Knowledge and Data Engineering},
  21(9):1263--1284, 2009.

\bibitem{abe2004iterative}
Naoki Abe, Bianca Zadrozny, and John Langford.
\newblock An iterative method for multi-class cost-sensitive learning.
\newblock In {\em Proceedings of the ACM SIGKDD International Conference on
  Knowledge Discovery and Data Mining}, pages 3--11. ACM, 2004.

\bibitem{kim2002pattern}
Hyun-Chul Kim, Shaoning Pang, Hong-Mo Je, Daijin Kim, and Sung~Yang Bang.
\newblock Pattern classification using support vector machine ensemble.
\newblock In {\em Proceedings of the International Conference on Pattern
  Recognition}, volume~2, pages 160--163. IEEE, 2002.

\bibitem{widodo2007support}
Achmad Widodo and Bo-Suk Yang.
\newblock Support vector machine in machine condition monitoring and fault
  diagnosis.
\newblock {\em Mechanical Systems and Signal Processing}, 21(6):2560--2574,
  2007.

\bibitem{tan2003multi}
Aik~Choon Tan, David Gilbert, and Yves Deville.
\newblock Multi-class protein fold classification using a new ensemble machine
  learning approach.
\newblock {\em Genome Informatics}, 14:206--217, 2003.

\bibitem{lecun1998gradient}
Yann LeCun, L{\'e}on Bottou, Yoshua Bengio, and Patrick Haffner.
\newblock Gradient-based learning applied to document recognition.
\newblock {\em Proceedings of the IEEE}, 86(11):2278--2324, 1998.

\bibitem{fithian2014local}
William Fithian and Trevor Hastie.
\newblock Local case-control sampling: efficient subsampling in imbalanced data
  sets.
\newblock {\em Annals of Statistics}, 42(5):1693, 2014.

\bibitem{webb2006introducing}
Steve Webb, James Caverlee, and Calton Pu.
\newblock Introducing the webb spam corpus: using email spam to identify web
  spam automatically.
\newblock In {\em Proceedings of the Third Conference on Email and Anti-Spam},
  2006.

\bibitem{mantel1959statistical}
Nathan Mantel and William Haenszel.
\newblock Statistical aspects of the analysis of data from retrospective
  studies.
\newblock {\em Journal of the National Cancer Institute}, 22(4):719--748, 1959.

\bibitem{breslow1982design}
Norman Breslow.
\newblock Design and analysis of case-control studies.
\newblock {\em Annual Review of Public Health}, 3(1):29--54, 1982.

\bibitem{anderson1972separate}
James~A Anderson.
\newblock Separate sample logistic discrimination.
\newblock {\em Biometrika}, 59(1):19--35, 1972.

\bibitem{king2001logistic}
Gary King and Langche Zeng.
\newblock Logistic regression in rare events data.
\newblock {\em Political Analysis}, 9(2):137--163, 2001.

\bibitem{horvitz1952generalization}
Daniel~G Horvitz and Donovan~J Thompson.
\newblock A generalization of sampling without replacement from a finite
  universe.
\newblock {\em Journal of the American statistical Association},
  47(260):663--685, 1952.

\bibitem{scott1986fitting}
Alastair~J Scott and CJ~Wild.
\newblock Fitting logistic models under case-control or choice based sampling.
\newblock {\em Journal of the Royal Statistical Society. Series B
  (Methodological)}, pages 170--182, 1986.

\bibitem{scott1991fitting}
AJ~Scott and CJ~Wild.
\newblock Fitting logistic regression models in stratified case-control
  studies.
\newblock {\em Biometrics}, pages 497--510, 1991.

\bibitem{scott2002robustness}
Alastair Scott and Chris Wild.
\newblock On the robustness of weighted methods for fitting models to
  case--control data.
\newblock {\em Journal of the Royal Statistical Society: Series B (Statistical
  Methodology)}, 64(2):207--219, 2002.

\bibitem{atkeson1997locally}
Christopher~G Atkeson, Andrew~W Moore, and Stefan Schaal.
\newblock Locally weighted learning for control.
\newblock In {\em Lazy Learning}, pages 75--113. Springer, 1997.

\bibitem{friedman2000additive}
Jerome Friedman, Trevor Hastie, Robert Tibshirani, et~al.
\newblock Additive logistic regression: a statistical view of boosting (with
  discussion and a rejoinder by the authors).
\newblock {\em The annals of statistics}, 28(2):337--407, 2000.

\bibitem{vedaldi2015matconvnet}
Andrea Vedaldi and Karel Lenc.
\newblock Matconvnet: Convolutional neural networks for matlab.
\newblock In {\em Proceedings of the Annual ACM Conference on Multimedia
  Conference}, pages 689--692. ACM, 2015.

\end{thebibliography}

\clearpage

\appendix

\section{Proofs}\label{appendix:proof}
We first state a lemma that will be useful for the proofs.
\begin{lem}\label{lem:1}
For some norm $\|\cdot\|$ defined on the parameter space of $\bT$, assume that the quantities $\|\nabla_{\bt_k} f(\x,\bt_k)\|$, $\|\nabla_{\bt_k}^2 f(\x,\bt_k)\|$ and $\|\nabla_{\bt_k}^3 f(\x,\bt_k)\|$ for $k=1,\ldots,K-1$ are bounded. Then, for any compact set $\mathbb{S}\in\R^{d(K-1)}$, we have
{\small
\begin{align*}
	&\sup_{\bT\in\mathbb{S}}|\hat{R}_n(\bT)-R(\bT)|\xrightarrow{p}0,\ \ \
	\sup_{\bT\in\mathbb{S}}\|\nabla_{\bT} \hat{R}_n(\bT)-\nabla_{\bT} R(\bT)\|\xrightarrow{p}0,
	\\
	&\sup_{\bT\in\mathbb{S}}\|\nabla_{\bT}^2 \hat{R}_n(\bT)-\nabla_{\bT}^2 R(\bT)\|\xrightarrow{p}0.
\end{align*}
}
\end{lem}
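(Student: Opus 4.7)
The plan is to prove all three statements by the standard pointwise law of large numbers plus stochastic equicontinuity argument, exploiting the bounded derivative assumption on $f$ together with compactness of $\mathbb{S}$.

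First I would set up the per-sample loss $\ell(\bm{x},c,z;\bm{\Theta}) = z \bigl[\sum_{k=1}^{K-1} y_k\, g(\bm{x},\bm{\theta}_k) - \log\bigl(1 + \sum_{k=1}^{K-1} e^{g(\bm{x},\bm{\theta}_k)}\bigr)\bigr]$ so that $\hat{R}_n(\bm{\Theta}) = n^{-1}\sum_i \ell(\bm{x}_i, c_i, z_i; \bm{\Theta})$ and $R(\bm{\Theta}) = \E_{\A}[\ell]$. A key preliminary observation is that $g(\bm{x},\bm{\theta}_k)$ differs from $f(\bm{x},\bm{\theta}_k)$ only by the $\bm{\Theta}$-independent term $\log(a(\bm{x},k)/a(\bm{x},K))$, so for $j=1,2,3$ one has $\nabla^j_{\bm{\Theta}}\, g = \nabla^j_{\bm{\Theta}}\, f$, and the bounded derivatives hypothesis transfers verbatim to $g$.

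For pointwise convergence at any fixed $\bm{\Theta}\in\mathbb{S}$, I would apply Khintchine's weak law to the i.i.d. sequence $\{\ell(\bm{x}_i,c_i,z_i;\bm{\Theta})\}$; integrability follows from $|z\,y_k|\leq 1$, from the bound on $\|\nabla_{\bm{\theta}_k} f\|$ (which together with compactness of $\mathbb{S}$ controls $|g|$ via a Taylor expansion about a fixed $\bm{\Theta}_0\in\mathbb{S}$), and from the fact that $\log(1+\sum e^{g})$ is controlled by the same quantities. To upgrade this to uniform convergence over $\mathbb{S}$, I would show that the family $\{\ell(\cdot,\cdot,\cdot;\bm{\Theta}):\bm{\Theta}\in\mathbb{S}\}$ is uniformly Lipschitz. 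A direct computation gives $\nabla_{\bm{\theta}_k}\ell = z\bigl(y_k - \pi_k(\bm{x},\bm{\Theta})\bigr)\nabla_{\bm{\theta}_k} f(\bm{x},\bm{\theta}_k)$, where $\pi_k = e^{g_k}/(1+\sum_j e^{g_j})\in(0,1)$; since $|z|,|y_k|,|\pi_k|\leq 1$ and $\|\nabla_{\bm{\theta}_k}f\|$ is bounded by hypothesis, there is a deterministic constant $L$ with $\|\nabla_{\bm{\Theta}}\ell(\bm{x},c,z;\bm{\Theta})\|\leq L$ almost surely. A finite $\epsilon/(3L)$-net of the compact set $\mathbb{S}$ plus a union bound then converts the pointwise convergence into the claimed uniform convergence of $\hat{R}_n$.

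The same template handles $\nabla \hat{R}_n$ and $\nabla^2 \hat{R}_n$: each coordinate of these objects is a sample average of an i.i.d. sequence formed from products of softmax probabilities $\pi_k$ (uniformly bounded by $1$) with derivatives of $f$ up to order $2$ and $3$ respectively (uniformly bounded by hypothesis). Thus each such coordinate satisfies the pointwise LLN and admits a deterministic Lipschitz bound in $\bm{\Theta}$, and the same covering argument yields uniform convergence component by component. The main obstacle is purely mechanical, namely carrying out the differentiation of $\log(1+\sum e^{g})$ twice to exhibit $\nabla^2\ell$ as a manifestly bounded combination of the softmax variances $\pi_k(\delta_{k\ell}-\pi_\ell)$, outer products $\nabla_{\bm{\theta}_k} f \otimes \nabla_{\bm{\theta}_\ell} f$, and single-block Hessians $\nabla^2_{\bm{\theta}_k} f$; once this expansion is written down, the required uniform bounds and Lipschitz constants follow immediately from the third-derivative assumption, and the three covering arguments run in parallel.
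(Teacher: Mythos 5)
Your proposal is correct and follows essentially the same route as the paper: a pointwise law of large numbers for the i.i.d.\ per-sample losses, a uniform Lipschitz bound in $\bm{\Theta}$ obtained from the bounded-derivative hypothesis (the paper bounds $\|\nabla_{\bm{\Theta}}\psi\|$ by $\sum_k\|\nabla_{\bm{\theta}_k}f\|\le M$ exactly as you bound $\|\nabla_{\bm{\Theta}}\ell\|\le L$), and a finite cover of the compact set $\mathbb{S}$ to upgrade pointwise to uniform convergence, with the gradient and Hessian statements handled by the same template one derivative order higher. Your added remarks---that $\nabla^j g=\nabla^j f$ since $g$ and $f$ differ by a $\bm{\Theta}$-independent shift, and the explicit softmax-variance expansion of $\nabla^2\ell$---are details the paper leaves implicit but do not change the argument.
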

\begin{proof}
For a fixed $\bT$, we define
{\small
\begin{align*}
\psi(\x, y, \bT)=\sum_{k=1}^{K-1}\mathbb{I}(y=k)\cdot g(\x,\bt_k)-\log\left(1+\sum_{k=1}^{K-1}e^{g(\x,\bt_k)}\right).
\end{align*}
}\noindent
Then, we have $\hat{R}_n(\bT)=\frac{1}{n}\sum_{i=1}^n z_i\psi(\x_i,y_i,\bT)$ and
$R(\bT)=\E_{\A} \Big[z_i\psi(\x_i,y_i,\bT)\Big]$. By the Law of Large Numbers, we know that $\hat{R}_n(\bT)$ converges point-wisely to $R(\bT)$ in probability.

According to the assumption, there exists a constant $M>0$ such that
{\small
\begin{align*}
\|\nabla_{\bT}\psi(\x,y,\bT)\|
&\leq \sum_{k=1}^{K-1} \left\| \left(\mathbb{I}(y=k)-\frac{e^{g(\x,\bt_k)}}{1+\sum_{k'=1}^{K-1}e^{g(\x,\bt_{k'})}}\right) \nabla_{\bt_k} f(\x,\bt_k) \right\|\leq M.
%&\leq \sum_{k=1}^{K-1}\left\| \nabla_{\bt_k} f(\x,\bt_k) \right\| \leq M.
\end{align*}
}\noindent
Given any $\epsilon>0$, we may find a finite cover $\mathbb{S}_\epsilon =\{\bT_1,\ldots,\bT_T\} \subset \mathbb{S}$ so that for any $\bT\in\mathbb{S}$, there exists $\bT'\in\mathbb{S}_\epsilon$ such that $|\psi(\x,y,\bT)-\psi(\x,y,\bT')| < \epsilon$. Since $\mathbb{S}_\epsilon$ is finite, as $n\to\infty$, $\sup_{\bT\in\mathbb{S}_\epsilon} | \hat{R}_n(\bT)-R(\bT)|$ converges to $0$ in probability. Therefore as $n\to\infty$, with probability $1$, we have
{\small
\[
\sup_{\bT \in \mathbb{S}} | \hat{R}_n(\bT)-R(\bT)|< 2 \epsilon + \sup_{\bT \in \mathbb{S}_\epsilon} |\hat{R}_n(\bT)-R(\bT)| \to 2 \epsilon .
\]
}\noindent
Let $\epsilon \to 0$, we obtain the first bound. The second and the third bounds can be obtained similarly.
\end{proof}

\subsection{Proof of Theorem \ref{thm:variance}}
\begin{proof}
For notational simplicity, throughout the proofs we will abbreviate the point-wise functions
$f(\x,\bt_k)$, $g(\x,\bt_k)$, $p(\x,k)$, $a(\x,k)$ and $q(\x)$ at point $\x$ as $f_k$, $g_k$, $p_k$, $a_k$ and $q$, respectively.
Moreover, let $\nabla_i=\nabla_{\bt_i}f(\x,\bT^0)$.

(1) Define $\bm{g}=(g_1,\ldots,g_{K-1})^\top$ and the convex function
{\small
\[
G(\bm{g})=\log\left(1+\sum_{k=1}^{K-1}e^{g_k}\right).
\]
}\noindent
The Bregman divergence of the convex function $G(\bm{g})$ is 
{\small
\begin{align*}
\Delta(\bm{g},\bm{g}')&=G(\bm{g})-G(\bm{g}')-\nabla G(\bm{g}')^{\top}(\bm{g}-\bm{g}')
\\
&=\log\frac{1+\sum_{k=1}^{K-1}e^{g_k}}{1+\sum_{k=1}^{K-1}e^{g_k'}}+
\sum_{k=1}^{K-1} \frac{e^{g_k'}}{1+\sum_{k=1}^{K-1}e^{g_k'}} (g_k'-g_k).
\end{align*}
}\noindent
Since $G(\cdot)$ is convex, we have $\Delta(\bm{g},\bm{g}')\geq0$ and $\Delta(\bm{g},\bm{g}')=0$ only when $\bm{g}=\bm{g}'$. 

Given any $\bm{h}(\x)$, if we define
{\small
\[
\tilde{R}\left(\bm{h}(\x)\right)=\E_{\x,y,z\sim\A}\ z\left[\sum_{k=1}^{K-1}\mathbb{I}(y=k)h_k(\x)-\log\left(1+\sum_{k=1}^{K-1}e^{h_k(\x)}\right) \right],
\]
}\noindent
then $\bm{g}(\x,\bT^0)$ is the global optimizer of $\tilde{R}(\cdot)$ and $R(\bT)=\tilde{R}(\bm{g}(\x,\bT^0))$. It implies that,
{\small
\begin{align*}
&R(\bT^0)-R(\bT)=
\E_{\x,y\sim\D}\ a(\x,y)\Delta(\bm{g}(\x,\bT),\bm{g}(\x,\bT^0)).
\end{align*}
}\noindent

The assumption of the theorem implies that the parameter space is compact and we have $\P_{\D}(f(\x,\bT) \neq f(\x,\bT^0)) >0$, $\forall\bT\neq\bT^0$.  Therefore, for any $\bT\neq\bT^0$, we have $R(\bT) < R(\bT^0)$. 

It follows that given any $\epsilon'>0$, there exists $\epsilon>0$ so that $R(\bT) \geq R(\bT^0)-2 \epsilon$ implies that $\|\bT-\bT^0\|<\epsilon'$. Now according to Lemma \ref{lem:1}, given any $\delta >0$, when $n \to \infty$, with probability larger than $1-\delta$ we have
{\small
\[
R(\hat{\bT}_{Sub}) \geq \hat{R}_n(\hat{\bT}_{Sub}) - \epsilon
\geq  \hat{R}_n(\bT^0) - \epsilon
\geq  R(\bT^0) - 2 \epsilon .
\]
}\noindent
This implies that $\|\hat{\bT}_{Sub}-\bT^0\|<\epsilon'$.

(2) Let $\bT=(\bt_1^{\top},\ldots,\bt_{K-1}^{\top})^{\top}\in\mathbb{R}^{d(K-1)}$. By the mean value theorem, we have
{\small
\begin{align*}
\sqrt{n}\left(\hat{\bT}_{Sub}-\bT^0 \right)=-\nabla_{\bT}^2\hat{R}_n(\bar{\bT})^{-1}\sqrt{n}\nabla_{\bT}\hat{R}_n(\bT^0) ,
\end{align*}
}\noindent
where $\bar{\bT}=t\bT^0+(1-t)\hat{\bT}_{Sub}$ for some $t\in[0,1]$. 
Since $\hat{\bT}_{Sub}\xrightarrow{p}\bT^0$, we have $\bar{\bT}\xrightarrow{p}\bT^0$.
By the continuous mapping theorem, $\nabla_{\bT}^2 \hat{R}_n(\bar{\bT})^{-1}$ converges to $\nabla_{\bT}^2 \hat{R}_n(\bT^0)^{-1}$ which will eventually converge to $\nabla_{\bT}^2 R(\bT^0)^{-1}$ in probability according to Lemma~\ref{lem:1}. By the Slutsky's Theorem, the limiting distribution of $\sqrt{n}\left(\hat{\bT}_{Sub}-\bT^0\right)$ is given by
{\small
\[
-\nabla_{\bT}^2 R(\bT^0)^{-1}\sqrt{n}\nabla_{\bT}\hat{R}_n(\bT^0) .
\]
}\noindent
Observe that $\sqrt{n}\nabla_{\bT}\hat{R}_n(\bT^0)$ is the sum of $n$ i.i.d. random variables with mean $\E \sqrt{n}\nabla_{\bT}\hat{R}_n(\bT^0)=\sqrt{n}\E \nabla_{\bT}\hat{R}_n(\bT^0)=\bm{0}$. The variance of $\sqrt{n}\left(\hat{\bT}_{Sub}-\bT^0\right)$ is
%%%%%%%%%%%%%%%%%%%%%%%%%%%%%%%%%%%%%%%%%%%%%%%%%%%%%%%
{\small
\begin{align*}
	Var\left(\sqrt{n}(\hat{\bT}_{Sub}-\bT^0)\right)=
	\H(\bT^0)^{-1} Var\left(\sqrt{n}\nabla_{\bT}\hat{R}_n(\bT^0)\right)\H(\bT^0)^{-1},
\end{align*}
}\noindent
where $\H(\bT^0)=-\nabla_{\bT}^2R(\bT^0)$. Now, we derive explicit formula for $\H(\bT^0)$:
{\small
\begin{align*}
	\H(\bT^0)
	=\left[
	\begin{array}{cccc}
		\H_{11} & \H_{12} & \cdots & \H_{1,K-1} \\
		\H_{21} & \H_{22} & \cdots & \H_{2,K-1} \\
		\vdots & \vdots & \ddots & \vdots \\
		\H_{K-1,1} & \H_{K-1,2} & \cdots & \H_{K-1,K-1} \\
	\end{array}
	\right],
\end{align*}
}\noindent
where
{\small
\begin{align*}
\H_{jj}=\ &-\nabla_{jj}^2{R}(\bT^0)
\\
=\ &\E_{\x,y,z \sim \A}\
Z\left(
\frac{e^{g_j^0}}{1+\sum_{k=1}^{K-1}e^{g_k^0}}
\right)\left(
1-\frac{e^{g_j^0}}{1+\sum_{k=1}^{K-1}e^{g_k^0}}
\right)\cdot \nabla_j\nabla_j^\top
\nonumber
\\
=\ &\E_{\x,y \sim \D}\
a_y\left(
\frac{e^{g_j^0}}{1+\sum_{k=1}^{K-1}e^{g_k^0}}
\right)\left(
1-\frac{e^{g_j^0}}{1+\sum_{k=1}^{K-1}e^{g_k^0}}
\right)\cdot
\nabla_j\nabla_j^\top
\nonumber
\\
=\ &\E_{\x,y \sim \D}\
a_y\cdot\frac{a_jp_j\sum_{k\neq j}^Ka_kp_k}{\left(\sum_{k=1}^{K}a_kp_k\right)^2}\cdot
\nabla_j\nabla_j^\top
\nonumber
\\
=\ &\E_{\x\sim\D}\
\frac{a_jp_j\sum_{k\neq j}^Ka_kp_k}{\sum_{k=1}^{K}a_kp_k}\cdot
\nabla_j\nabla_j^\top,
\nonumber
\end{align*}
}\noindent
where the last equality is due to $e^{g_k^0}=\frac{a_kp_k}{a_Kp_K}$ based on Eq. (\ref{eq:logodds}),
%%%%%%%%%%%%%%%%%%%%%%%%%%%%%%%%%%%%%%%%%%%%%%%%%%
and
%%%%%%%%%%%%%%%%%%%%%%%%%%%%%%%%%%%%%%%%%%%%%%%%%%
{\small
\begin{align*}
\H_{ij}=\ &-\nabla_{ij}^2{R}(\bT^0)
\\
=\ &-\E_{\x,y \sim \D}\
a_y\left(
\frac{e^{g_i^0}}{1+\sum_{k=1}^{K-1}e^{g_k^0}}
\right)\left(
\frac{e^{g_j^0}}{1+\sum_{k=1}^{K-1}e^{g_k^0}}
\right)\cdot
\nabla_i\nabla_j^\top
\nonumber
\\
=\ &-\E_{\x\sim\D}\
\frac{a_ip_ia_jp_j}{\sum_{k=1}^{K}a_kp_k}\cdot
\nabla_i\nabla_j^\top.
\nonumber
\end{align*}
}\noindent
This implies that we can rewrite $\H(\bT^0)$ as
{\small
\begin{align*}
	\H(\bT^0)=\E_{\x\sim\D}
	\bm{\nabla}
	\mathbf{S}
	\bm{\nabla}^{\top}.
\end{align*}
}\noindent
%%%%%%%%%%%%%%%%%%%%%%%%%%%%%%%%%%%%%%%%%%%%%%%%%%%%

Next, we derive an explicit formula for $Var\left(\sqrt{n}\nabla_{\bT}\hat{R}_n(\bT^0)\right)$ as
{\small
\begin{align*}
Var\left(\sqrt{n}\nabla_{\bT}\hat{R}_n(\bT^0)\right)=
	\left[
	\begin{array}{cccc}
		\V_{11} & \V_{12} & \cdots & \V_{1,K-1} \\
		\V_{21} & \V_{22} & \cdots & \V_{2,K-1} \\
		\vdots & \vdots & \ddots & \vdots \\
		\V_{K-1,1} & \V_{K-1,2} & \cdots & \V_{K-1,K-1} \\
	\end{array}
	\right],
\end{align*}
}\noindent
where
{\small
\begin{align*}
\V_{jj}
=\ &\E_{\x,y,z\sim\A}\ z\left(\mathbb{I}(y=j)-
\frac{e^{g_j^0}}{1+\sum_{k=1}^{K-1}e^{g_k^0}}\right)^2
\cdot\nabla_j\nabla_j^\top
\qquad (z^2=z)
\nonumber
\\
=\ &\E_{\x,y \sim \D}\ a_y\left(\mathbb{I}(y=j)-
\frac{e^{g_j^0}}{1+\sum_{k=1}^{K-1}e^{g_k^0}} \right)^2
\cdot\nabla_j\nabla_j^\top
\nonumber
	\\
=\ &\E_{\x\sim\D}
\left[
\sum_{k\neq j}^Ka_kp_k\cdot\left(\frac{a_jp_j}{\sum_{k=1}^{K}a_kp_k}\right)^2+a_jp_j\left(1-\frac{a_jp_j}{\sum_{k=1}^{K}a_kp_k}\right)^2
\right]
\cdot\nabla_j\nabla_j^\top
\nonumber
\\
=\ &\E_{\x\sim\D}
\frac{a_jp_j\sum_{k\neq j}^Ka_kp_k}{\sum_{k=1}^{K}a_kp_k}\cdot
\nabla_j\nabla_j^\top,
\nonumber
\end{align*}
}\noindent
%%%%%%%%%%%%%%%%%%%%%%%%%%%%%%%%%%%%%%%%%%%%%%%%%%%%
and
%%%%%%%%%%%%%%%%%%%%%%%%%%%%%%%%%%%%%%%%%%%%%%%%%%%%
{\small
\begin{align}
\V_{ij}
=\ &\E_{\x,y\sim\D}\ a_y\left(\mathbb{I}(y=i)-
\frac{e^{g_i^0}}{1+\sum_{k=1}^{K-1}e^{g_k^0}}\right)
\left(\mathbb{I}(y=j)-
\frac{e^{g_j^0}}{1+\sum_{k=1}^{K-1}e^{g_k^0}}\right)\cdot
\nabla_i\nabla_j^\top
\nonumber
\\
=\ &\E_{\x\sim\D}
\left[
\sum_{k\neq i, j}^K\frac{a_i p_i a_j p_j a_k p_k}{\left(\sum_{k=1}^Ka_kp_k\right)^2}
-\frac{a_i p_i a_j p_j\sum_{k\neq i}^Ka_kp_k}{\left(\sum_{k=1}^Ka_kp_k\right)^2}
-\frac{a_i p_i a_j p_j\sum_{k\neq j}^Ka_kp_k}{\left(\sum_{k=1}^Ka_kp_k\right)^2}
\right]
\cdot\nabla_i\nabla_j^\top
\nonumber
\\
=\ &-\E_{\x\sim\D}\
\frac{a_i p_i a_j p_j}{\sum_{k=1}^{K}a_kp_k}\cdot
\nabla_i\nabla_j^\top.
\nonumber
\end{align}
}\noindent
This indicates that $\H(\bT^0)=\text{Var}\left(\sqrt{n}\nabla_{\bT}\hat{R}_n(\bT^0)\right)$. Hence, we have
{\small
\[
\sqrt{n}\left(\hat{\bT}_{Sub}-\bT^0\right)\xrightarrow{d}\mathcal{N}\left(\bm{0},
  	\left[
\E_{\x\sim\D}
	\bm{\nabla}
	\mathbf{S}
	\bm{\nabla}^{\top}
\right]^{-1}
\right).
\]
}\noindent
This proves the desired result.
\end{proof}

\subsection{Proof of Theorem \ref{thm:LUS}}
\begin{proof}
Observe that  the quantity $\bm{\nabla}$ is independent of the sampling probability $a(\x,k)$, and the dependence on $a(\x,k)$ in Eq. (\ref{eq:finalvariance}) comes from $\mathbf{S}$.
Therefore, to prove the desired bound of the variance, we only need to show
{\small
\begin{align*}
\gamma\mathbf{S}\succeq\mathbf{S}_{full}=\gamma\mathbf{S}_{US:\frac{1}{\gamma}} .
\end{align*}
}\noindent
In order to prove this inequality, we only need to verify that for any vector $\bm{\beta}\in\R^{K-1}$, $\bm{\beta}^{\top}(\gamma\mathbf{S}-\mathbf{S}_{full})\bm{\beta}\geq0$. That is,
{\small
\begin{align}
\gamma\sum_{i=1}^{K-1}a_ip_i\beta_i^2-\frac{\gamma}{\sum_{i=1}^Ka_ip_i}\left(\sum_{i=1}^{K-1}a_ip_i\beta_i\right)^2-
\sum_{i=1}^{K-1}p_i\beta_i^2+\left(\sum_{i=1}^{K-1}p_i\beta_i\right)^2\geq0,
\label{eq:PSD}
\end{align}
}\noindent
where $\sum_{i=1}^Kp_i=1$ and $0\leq a_i\leq1$.

%\begin{enumerate}
%\renewcommand{\labelenumi}{(\theenumi)}
%	\item

\medskip
\noindent
(1) Consider the first case in Theorem \ref{thm:LUS}, where $\gamma\geq2q$.
Given $\x$, we consider the following three sub-cases.
\medskip

(1.1) Assume that $q=p_K\geq0.5$. Denote the left side of Eq. (\ref{eq:PSD})  by $F(a)$, and plug Eq. (\ref{eq:a1}) into $F(a)$,  we obtain $\sum_{i=1}^Ka_ip_i=\frac{4}{\gamma}q(1-q)$, and
{\small
\begin{align*}
F(a)&=(2q-1)\sum_{i=1}^{K-1}p_i\beta_i^2-\left(\frac{q}{1-q}-1\right)\left(\sum_{i=1}^{K-1}p_i\beta_i\right)^2
\\
&\geq\frac{2q-1}{1-q}\left(\sum_{i=1}^{K-1}p_i\beta_i\right)^2-\left(\frac{q}{1-q}-1\right)\left(\sum_{i=1}^{K-1}p_i\beta_i\right)^2=0,
\end{align*}
}\noindent
where the inequality is obtained by the Cauchy-Schwartz inequality, which implies
{\small
\[
\sum_{i=1}^{K-1}p_i\beta_i^2\geq\frac{\left(\sum_{i=1}^{K-1}p_i\beta_i\right)^2}{\sum_{i=1}^{K-1}p_i}.
\]
}\noindent
The equality holds if and only if $\beta_1=\cdots=\beta_{K-1}$. Therefore, $F(a)\geq0$.

(1.2) Assume that there exists some $j\neq K$ such that $q=p_j\geq0.5$. By plugging Eq. (\ref{eq:a1}) into $F(a)$, we have
{\small
\begin{align*}
&F(a)=2q\left(\sum_{i\neq j}^{K-1}p_i\beta_i^2+(1-q)\beta_j^2\right)
-\frac{q}{1-q}\left(\sum_{i\neq j}^{K-1}p_i\beta_i+(1-q)\beta_j\right)^2-\sum_{i=1}^{K-1}p_i\beta_i^2
\\
&+\left(\sum_{i=1}^{K-1}p_i\beta_i\right)^2
=(2q-1)\sum_{i\neq j}^{K-1}p_i\beta_i^2-\left(\frac{q}{1-q}-1\right)\left(\sum_{i\neq j}^{K-1}p_i\beta_i\right)^2
\geq 0.
\end{align*}
}\noindent
Again, the above inequality is obtained by the Cauchy-Schwartz inequality.

(1.3) Assume that $p_k<0.5$ for all $k=1,\ldots,K$ and hence $q=0.5$. Under this case, we can immediately obtain $F(a)=0$.

Combing (1.1), (1.2) and (1.3), we have shown that $F(a)\geq0$ under the first case, where $\gamma \geq 2q$.

\medskip
\noindent
(2) Consider the second case in Theorem \ref{thm:LUS}, where $1\leq\gamma<2q$.
Given data point $\x$, we consider the following three sub-cases.
\medskip

(2.1) Assume that $q=p_K\geq0.5$. By plugging Eq. (\ref{eq:a2}) into $F(a)$, we have
$\sum_{i=1}^Ka_ip_i=\frac{\gamma(1-q)}{\gamma-q}$, and
{\small
\begin{align*}
F(a)=(\gamma-1)\sum_{i=1}^{K-1}p_i\beta_i^2-\left(\frac{\gamma-q}{1-q}-1\right)\left(\sum_{i=1}^{K-1}p_i\beta_i\right)^2\geq0.
\end{align*}
}\noindent

(2.2) Assume that there exists some $j\neq K$ such that $q=p_j\geq0.5$. By plugging Eq. (\ref{eq:a2}) into $F(a)$, we have
{\small
\begin{align*}
&F(a)=\gamma\left(\sum_{i\neq j}^{K-1}p_i\beta_i^2+\frac{q(1-q)}{\gamma-q}\beta_j^2\right)
-\frac{\gamma-q}{1-q}\left(\sum_{i\neq j}^{K-1}p_i\beta_i+\frac{q(1-q)}{\gamma-q}\beta_j\right)^2
\\
&-\sum_{i=1}^{K-1}p_i\beta_i^2+\left(\sum_{i=1}^{K-1}p_i\beta_i\right)^2
=(\gamma-1)\sum_{i\neq j}^{K-1}p_i\beta_i^2-\left(\frac{\gamma-q}{1-q}-1\right)\left(\sum_{i\neq j}^{K-1}p_i\beta_i\right)^2
\geq 0.
\end{align*}
}\noindent

(2.3) Assume that $p_k<0.5$ for all $k=1,\ldots,K$ and hence $q=0.5$. Under this case, we can derive with the same way as above and obtain
{\small
\[
F(a) = (\gamma-1)\left[\sum_{i=1}^{K-1}p_i\beta_i^2-\left(\sum_{i=1}^{K-1}p_i\beta_i\right)^2\right]
\geq0 .
\]
}\noindent
Combing (2.1), (2.2) and (2.3), we have shown $F(a)\geq0$ under the second case $1 \leq \gamma < 2q$.
%\end{enumerate}

The first case and the second case together imply that Eq. (\ref{eq:PSD}) holds, which proves
$\gamma\mathbf{S}\succeq\mathbf{S}_{full}$. This leads to the desired variance bound in the theorem.

\medskip
We now study the expected sample size. 
%Again, we consider two cases below.
%\begin{enumerate}
%\item[(1)] 

\medskip
\noindent
(1) Consider the first case in Theorem \ref{thm:LUS}, where $\gamma\geq2q$: the conditional expectation of $a(\x,y)$ given $\x$ is
{\small
\[
\bar{a}(\x)=\E_{y|\x\sim\D}\ a(\x,y)= \sum_{k=1}^Ka_kp_k=\frac{4}{\gamma} q(1-q).
\]
}\noindent
Therefore, the point-wise conditional expectation of the acceptance probability satisfies $\bar{a}(\x)\leq\frac{1}{\gamma}$.

%\item[(2)] 

\noindent
(2) Consider the second case in Theorem \ref{thm:LUS}, where $1 \leq \gamma < 2q$:
the conditional expectation of $a(\x,y)$ given $\x$ is
{\small
\[
\bar{a}(\x)=\E_{y|\x\sim\D}\ a(\x,y)= \sum_{k=1}^Ka_kp_k=\frac{1}{\gamma} \frac{\gamma^2(1-q)}{\gamma-q}.
\]
}\noindent
Note that $0.5\leq q\leq1$ and $1\leq\gamma<2q$. We have
{\small
\[
\gamma \bar{a}(\x)-1 =\frac{\gamma^2(1-q)}{\gamma-q}-1
=\frac{(\gamma-1)(\gamma(1-q)-q)}{\gamma-q}<\frac{q(\gamma-1)(1-2q)}{\gamma-q}\leq0.
\]
}\noindent
Therefore, the point-wise conditional expectation of the acceptance probability satisfies $\bar{a}(\x)\leq\frac{1}{\gamma}$.
%\end{enumerate}

\medskip
Combing both (1) and (2), we know that the expected number of accepted examples is
{\small
\[
n_{Sub}=n \E_{\x \sim \D}\ \bar{a}(\x) \leq\frac{n}{\gamma}.
\]
}\noindent
This completes the proof of Theorem~\ref{thm:LUS}.
\end{proof}

\subsection{Proof of Corollary \ref{cor:pilot}}
\begin{proof}
The proof follows the structure in proof of Theorem \ref{thm:variance}. By the mean value theorem, we have
{\small
\begin{align*}
\sqrt{n}\left(\hat{\bT}_{LUS}-\bT^0 \right)=-\nabla_{\bT}^2\hat{R}_n(\hat{\bl},\bar{\bT})^{-1}\sqrt{n}\nabla_{\bT}\hat{R}_n(\hat{\bl},\bT^0) ,
\end{align*}
}\noindent
where $\bar{\bT}=t\bT^0+(1-t)\hat{\bT}_{LUS}$ for some $t\in[0,1]$. 
Since $\hat{\bT}_{LUS}\stackrel{p}{\rightarrow} \bT^0$, we have $\bar{\bT} \stackrel{p}{\rightarrow} \bT^0$. From the condition in the theorem, $\hat{\bl}\stackrel{p}{\rightarrow} \bT^0$. Therefore, by the continuous mapping theorem, $\nabla_{\bT}^2 \hat{R}_n(\hat{\bl},\bar{\bT})^{-1}$ converges to $\nabla_{\bT}^2 \hat{R}_n(\bT^0,\bT^0)^{-1}$.  Moreover, by Lemma~\ref{lem:1}, $\nabla_{\bT}^2 \hat{R}_n(\bT^0,\bT^0)^{-1}$ converges to $\nabla_{\bT}^2 R(\bT^0,\bT^0)^{-1}$ as $n\rightarrow\infty$. 
Finally, by the Slutsky's Theorem, the limiting distribution of $\sqrt{n}\left(\hat{\bT}_{LUS}-\bT^0\right)$ is given by
{\small
\[
-\nabla_{\bT}^2 R(\bT^0,\bT^0)^{-1}\sqrt{n}\nabla_{\bT}\hat{R}_n(\bT^0,\bT^0).
\]
}\noindent
Following the proof of Theorem \ref{thm:variance}, we have
{\small
\[
Var\left(\sqrt{n}\left(\hat{\bT}_{LUS}-\bT^0\right)\right)=[\E_{\x\sim\D}\bm{\nabla}(\bT^0)\mathbf{S}(\bT^0, \bT^0)\bm{\nabla}(\bT^0)^\top]^{-1},
\]
}\noindent
which is a constant that is independent of the pilot estimator $\hat{\bl}$.
\end{proof}

\subsection{Proof of Proposition \ref{prop:mis1}}

\begin{proof}
The goal is to show that $\bt^*$ is the maximizer of $R(\bt^*, \bt)$. When $K=2$, the label $y\in\{0,1\}$ and Eqs.~(\ref{eq:risk}) and (\ref{eq:LUSriskPop}) can be rewritten as
{\small
\begin{align}
L(\bt)&= \E_{\x,y \sim \D}\ \left[ \mathbb{I}(y=1) \cdot f(\x,\bt) - \log \left(1 + e^{f(\x,\bt)}\right) \right],
\label{eq:binaryL}
%\\
%&= \E_{\x\sim\D}\ p(\x,1)f(\x,\bt)-\log \left(1 + e^{f(\x,\bt)}\right).
%\nonumber
\end{align}
and 
\begin{align}
R(\bl,\bt)&= \E_{\x,y,z \sim \A}\ z\left[ \mathbb{I}(y=1) \cdot g(\x,\bt) - \log \left(1 + e^{g(\x,\bt)}\right) \right]
\label{eq:binaryrisk}
\\
&= \E_{\x\sim\D}\ a_{\bl}(\x,1)p(\x,1)
\left[
g(\x,\bt)-\log \left(1 + e^{g(\x,\bt)}\right)
\right]
\nonumber
\\
&\ \ \ \ \ \ \ \ \ \ \ \ 
-a_{\bl}(\x,0)p(\x,0)\log \left(1 + e^{g(\x,\bt)}\right),
\nonumber
\end{align}
}\noindent
where we write $a_{\bl} (\cdot,\cdot)$ to indicate that the acceptance probability is computed based on the pilot estimator $\bl$.
The first order and second order derivatives of $R(\bl,\bt)$ with respect to $\bt$ are
{\small
\[
\nabla_{\bt}R(\bl,\bt)=\E_{\x}\left[a_{\bl}(\x,1)p(\x,1)\frac{1}{1+e^{g(\x,\bt)}}-a_{\bl}(\x,0)p(\x,0)\frac{e^{g(\x,\bt)}}{1+e^{g(\x,\bt)}}\right]\nabla_{\bt}f(\x,\bt)
\] 
}
and 
{\small
\[
\nabla_{\bt}^2R(\bl,\bt)=-\E_{\x}\left[a_{\bl}(\x,1)p(\x,1)+a_{\bl}(\x,0)p(\x,0)\right]\frac{e^{g(\x,\bt)}}{(1+e^{g(\x,\bt)})^2}\nabla_{\bt}f(\x,\bt)\nabla_{\bt}f(\x,\bt)^{\top},
\]
}
respectively. 

In this case, the misspecified logistic model is defined as
{\small
\begin{align}
\label{proof lem:identity1}
f(\x,\bt^*)=\log\frac{p^*(\x,1)}{p^*(\x,0)}.
\end{align}
}\noindent
Suppose $q(\x)=p^*(\x,1)=\frac{e^{f(\x,\bt^*)}}{1+e^{f(\x,\bt^*)}}\geq0.5$. For the choice of $\gamma$ either in Eq.~(\ref{eq:a1}) or Eq.~(\ref{eq:a2}), we have
{\small
\begin{align*}
&\nabla_{\bt}R(\bt^*,\bt^*)
\\
&=\E_{\x\sim\D}
\left[
a(\x,1)p(\x,1)\left(1-\frac{e^{g(\x,\bt^*)}}{1+e^{g(\x,\bt^*)}}\right)
-a(\x,0)p(\x,0)\frac{e^{g(\x,\bt^*)}}{1+e^{g(\x,\bt^*)}}
\right]\nabla_{\bt}f (\x,\bt^*)
\nonumber
\\
&=\E_{\x\sim\D}
\left[
a(\x,1)p(\x,1)\left(1-\frac{a(\x,1)e^{f(\x,\bt^*)}}{a(\x,0)+a(\x,1)e^{f(\x,\bt^*)}}\right)
\right.
\\
&\ \ \ 
\left.
-a(\x,0)p(\x,0)\frac{a(\x,1)e^{f(\x,\bt^*)}}{a(\x,0)+a(\x,1)e^{f(\x,\bt^*)}}
\right]\nabla_{\bt}f (\x,\bt^*)
\nonumber
\\
&=\E_{\x\sim\D}
\frac{1}{\gamma}[a(\x,1)p(\x,1)-a(\x,0)p(\x,0)]\nabla_{\bt}f (\x,\bt^*)
\nonumber
\\
&=
\E_{\x\sim\D}
\frac{1}{\gamma}\left[(1-p^*(\x,1))p(\x,1)-p^*(\x,1)(1-p(\x,1))\right]\nabla_{\bt}f (\x,\bt^*)
\nonumber
\\
&=
\frac{1}{\gamma}\nabla_{\bt}L(\bt^*),
\nonumber
\end{align*}
}\noindent
where the third equality holds by plugging in Eq.~(\ref{proof lem:identity1}) and the definition of the acceptance probability in either Eq. (\ref{eq:a1}) or Eq. (\ref{eq:a2}). Since $\bt^*$ is the best estimator of the original MLE problem in Eq.~(\ref{eq:binaryL}), $\nabla_{\bt}R(\bt^*,\bt^*)=\frac{1}{\gamma}\nabla_{\bt}L(\bt^*)=\bm{0}$. Therefore, $\bt^*$ is the maximizer of $R(\bt^*,\bt)$.
The case of $q(\x)=p^*(\x,0)\geq0.5$ can be derived similarly because of the symmetry. Given $\bl$, $R(\bl,\bt)$ is strictly concave on $\bt$ because of the expectation over the population. With the assumption that $\P_\D(f(\x,\bt)\neq f(\x,\bt^*))>0$ for all $\bt\neq\bt^*$, we have $R(\bt^*,\bt)<R(\bt^*,\bt^*)$ for any $\bt\neq\bt^*$. Therefore, $\bt^*$ is the unique maximizer of $R(\bt^*,\bt)$.

%Now, since the parameter space is compact, we have $\P_{\D}(f(\x,\bt) \neq f(\x,\bt^*)) >0$, $\forall\bt\neq\bt^*$.  Therefore, for any $\bt\neq\bt^*$, we have $R(\bt) < R(\bt^*)$.  
It follows that given any $\epsilon'>0$, there exists $\epsilon>0$ so that $R(\bt^*,\bt) \geq R(\bt^*,\bt^*)-2 \epsilon$ implies that $\|\bt-\bt^*\|<\epsilon'$. Now according to Lemma \ref{lem:1}, given any $\delta >0$, when $n \to \infty$, with probability larger than $1-\delta$ we have
{\small
\[
R(\bt^*,\hat{\bt}_{LUS}) \geq \hat{R}_n(\bt^*,\hat{\bt}_{LUS}) - \epsilon
\geq  \hat{R}_n(\bt^*,\bt^*) - \epsilon
\geq  R(\bt^*,\bt^*) - 2 \epsilon .
\]
}\noindent
This implies that $\|\hat{\bt}_{LUS}-\bt^*\|<\epsilon'$. So, the empirical estimator $\hat{\bt}_{LUS}$ converges to $\bt^*$ as $n\rightarrow\infty$.
\end{proof}

\subsection{Proof of Proposition \ref{prop:mis2}}
\begin{proof}
%We first show $\hat{\bt}_{LUS}\rightarrow\bt^*$. Define the convex function $G(g)=\log(1+e^g)$. The Bregman divergence of the convex function $G(g)$ is 
%{\small
%\begin{align*}
%\Delta(g, g')
%=G(g)-G(g')-\nabla G(g')^{\top}(g-g')
%=\log\frac{1+e^g}{1+e^{g'}}+\frac{e^{g'}}{1+e^{g'}} (g'-g).
%\end{align*}
%}\noindent
%Since $G(\cdot)$ is convex, we have $\Delta(g, g')\geq0$ and $\Delta(g, g')=0$ only when $g=g'$. 
%
%Given any $h(\x)$, if we define
%{\small
%\[
%\tilde{R}(h(\x))=\E_{\x,c,z\sim\A}\ z\left[\mathbb{I}(c=1)h(\x)-\log(1+e^{h(\x)})\right],
%\]
%}\noindent
%then $h(\x)=g(\x,\bt^*)$ is the global optimizer of $\tilde{R}(\cdot)$ and $R(\bt)=\tilde{R}(g(\x,\bt^*))$. It implies that,
%{\small
%\begin{align*}
%&R(\bt^*)-R(\bt)=
%\E_{\x,c\sim\D}\ a(\x,c)\Delta(g(\x,\bt), g(\x,\bt^*)).
%\end{align*}
%}\noindent
%
%The assumption of the theorem implies that the parameter space is compact and we have $\P_{\D}(f(\x,\bt)\neq f(\x,\bt^*)) >0$, $\forall\bt\neq\bt^*$. Therefore, for any $\bt\neq\bt^*$, we have $R(\bt)<R(\bt^*)$. 
%
%It follows that given any $\epsilon'>0$, there exists $\epsilon>0$ so that $R(\bt) \geq R(\bt^*)-2 \epsilon$ implies that $\|\bt-\bt^*\|<\epsilon'$. Now according to Lemma \ref{lem:1}, given any $\delta >0$, when $n \to \infty$, with probability larger than $1-\delta$ we have
%{\small
%\[
%R(\hat{\bt}_{LUS}) \geq \hat{R}_n(\hat{\bt}_{LUS}) - \epsilon
%\geq  \hat{R}_n(\bt^*) - \epsilon
%\geq  R(\bt^*) - 2 \epsilon .
%\]
%}\noindent
%This implies that $\|\hat{\bt}_{LUS}-\bt^*\|<\epsilon'$. 
By following the proofs of Proposition 3 and Theorem 5 in \cite{fithian2014local}, we can show that $\hat{\bt}_{LUS}\xrightarrow{p}\bt^*$ when $\hat{\bl}\xrightarrow{p}\bt^*$. In this proof, we focus on the analysis of the asymptotic variance. Let 
{\small
\[
\bar{\bt}(\hat{\bl})=\arg\max_{\bt}R(\hat{\bl},\bt).
\]
}
By the mean value theorem, we have
{\small
\[
\nabla_{\bt}\hat{R}(\hat{\bl}, \hat{\bt}_{LUS})=\nabla_{\bt}\hat{R}(\hat{\bl}, \bar{\bt}(\hat{\bl}))+
\nabla_{\bt}^2\hat{R}(\hat{\bl}, \bt_n')(\hat{\bt}_{LUS}-\bar{\bt}(\hat{\bl})),
\]
}\noindent
where $\bt_n'$ is some convex combination of $\hat{\bt}_{LUS}$ and $\bar{\bt}(\hat{\bl})$. Rearranging the equation, we obtain
{\small
\[
\sqrt{n}(\hat{\bt}_{LUS}-\bar{\bt}(\hat{\bl}))=-\nabla_{\bt}^2\hat{R}(\hat{\bl}, \bt_n')^{-1}
\cdot\sqrt{n}\nabla_{\bt}\hat{R}(\hat{\bl}, \bar{\bt}(\hat{\bl})).
\]
}\noindent
From the condition in the proposition, $\hat{\bl}\xrightarrow{p}\bt^*$ and $\bar{\bt}(\hat{\bl})\xrightarrow{p}\bar{\bt}(\bt^*)=\bt^*$ from Proposition \ref{prop:mis1}. Moreover, $\hat{\bt}_{LUS}\xrightarrow{p}\bt^*$ and hence $\bt_n'\xrightarrow{p}\bt^*$. By the continuous mapping theorem, $\nabla_{\bt}^2 \hat{R}_n(\hat{\bl},\bar{\bt})^{-1}$ converges to $\nabla_{\bt}^2 \hat{R}_n(\bt^*,\bt^*)^{-1}$ which will eventually converge to $\nabla_{\bt}^2 R(\bt^*,\bt^*)^{-1}$ as $n\rightarrow\infty$. By the Slutsky's Theorem, the limiting distribution of $\sqrt{n}(\hat{\bt}_{LUS}-\bt^*)$ is
{\small
\begin{align}
\sqrt{n}(\hat{\bt}_{LUS}-\bar{\bt}(\hat{\bl}))\xrightarrow{d}\mathcal{N}(\bm{0},
\H(\bt^*,\bt^*)^{-1}J(\bt^*,\bt^*)\H(\bt^*,\bt^*)^{-1}).
\label{eq:bridge}
\end{align}
}\noindent
%Following the techniques used in W. Fithian and T. Hastie's paper, we can show the first term tends in probability to $\nabla_{\bt}^2 R(\bt^*,\bt^*)^{-1}$ and the second term tends in distribution to $\mathcal{N}(\bm{0}, J(\bt^*,\bt^*))$, then by the Slutsky's theorem we obtain the result.

Define the cross partial derivative
{\small
\[
C(\bl,\bt)=\frac{\partial^2R(\bl,\bt)}{\partial\bt\partial\bl}.
\]
}\noindent
By viewing $\bar{\bt}(\bl)$ as a function of $\bl$, we apply the Implicit Function Theorem to the fact $\nabla_{\bt}R(\bl,\bar{\bt}(\bl))=\mathbf{0}$ such that the implicit derivative $\nabla_{\bl}\bar{\bt}(\bl)$ can be written as
{\small
\[
\nabla_{\bl}\bar{\bt}(\bl)=-\H(\bl,\bar{\bt}(\bl))^{-1}C(\bl,\bar{\bt}(\bl)).
\]
}\noindent
Then, by using the Taylor expansion and noting that $\bar{\bt}(\bt^*)=\bt^*$, we have
{\small
\begin{align*}
\bar{\bt}(\bl)
&=\bar{\bt}(\bt^*)+\nabla_{\bl}\bar{\bt}(\bl)|_{\bl=\bt^*}(\bl-\bt^*)+o(\|\bl-\bt^*\|)
\\
&=\bt^*-\H(\bt^*,\bt^*)^{-1}C(\bt^*,\bt^*)(\bl-\bt^*)+o(\|\bl-\bt^*\|),
\end{align*}
}\noindent
for some norm $\|\cdot\|$. Rearranging the terms and as $\hat{\bl}\xrightarrow{p}\bt^*$ we obtain
{\small
\begin{align*}
Var(\sqrt{n}(\bar{\bt}(\hat{\bl})-\bt^*))
&=\H(\bt^*,\bt^*)^{-1}C(\bt^*,\bt^*)\mathcal{V}_{\lambda}C(\bt^*,\bt^*)^{\top}\H(\bt^*,\bt^*)^{-1}.
\end{align*}
}\noindent
We reach the desired result by combining the above with Eq.~(\ref{eq:bridge}).
\end{proof}

\subsection{Proof of Proposition \ref{prop:mis3}}

\begin{proof}
%By absorbing $\x$, $\bl$ and $\bT$, we use the notations $g_k$, $f_k$, $a_k$, $p_k$ and $p_k^*$ to denote $g(\x,\bt_k)$, $f(\x,\bt_k)$, $a_{\bl}(\x,k)$, $p(\x,k)$ and $p(\x,\bt_k^*)$.
The population version with respect to $\A$ is
{\small
\begin{align*}
R(\bl,\bT)&=\E_{\x,y,z\sim\A}\ z\left[\sum_{k=1}^{K-1}\mathbb{I}(y=k)g_k-\log\left(1+\sum_{k=1}^{K-1}e^{g_k}\right) \right]
\\
&=\E_{\x\sim\A}\sum_{k=1}^{K-1}a_k p_k g_k-\log\left(1+\sum_{k=1}^{K-1}e^{g_k}\right)\sum_{k=1}^{K}a_k p_k.
\end{align*}
}
We measure the gradient with respect to $\bt_k$ by letting $\bT=\bT^*$ as
{\small
\begin{align*}
\nabla_{\bt_k}R(\bl,\bT^*)&=\E_{\x}\left[a_k p_k-\frac{e^{g_k^*}}{1+\sum_{k=1}^{K-1}e^{g_k^*}}\sum_{k=1}^{K}a_k p_k\right]\nabla_{\bt_k}f_k^*
\\
&=\E_{\x}\left[a_k p_k-\frac{a_k p_k^*}{\sum_{k=1}^K a_k p_k^*}\sum_{k=1}^{K}a_k p_k
\right]\nabla_{\bt_k}f_k^*.
\end{align*}
}\noindent
Now, given $\x$, suppose there exists a majority class $i$ such that $p_i^*\geq0.5$. Then, if the acceptance probability $a_k$ is computed from a pilot that is $\bT^*$, for choice of $\gamma$ in Eq.~(\ref{eq:a1}), we have
{\small
\[
\nabla_{\bt_i}R(\bT^*,\bT^*)=\E_{\x}\ \frac{1}{\gamma}(p_i-p_i^*)\nabla_{\bt_i}f_i^*=\frac{1}{\gamma}\E_{\x}\nabla_{\bt_i}L(\bT^*)\equiv \bm{0},
\]
}\noindent
where $L(\bT)$ is the likelihood defined over the original population $\D$. Since $\bT^*$ is the best estimator of $\arg\max_{\bT}L(\bT)$ under misspecification, the above quantity is $\bm{0}$.
For choice of $\gamma$ in Eq.~(\ref{eq:a2}), we can have the same result. That is, the partial derivative of $R(\bT^*,\bT^*)$ with respect to the majority class $i$ is always $\bm{0}$. Now, for class $j\neq i$, with choice of Eq.~(\ref{eq:a1}), we have
{\small
\[
\nabla_{\bt_j}R(\bT^*,\bT^*)=\E_{\x}\ \frac{2}{\gamma}
\left\{p_i^*p_j-\big[p_i(1-p_i^*)+p_i^*(1-p_i)\big]\frac{p_j^*}{2(1-p_i^*)}
\right\}\nabla_{\bt_j}f_j^*.
\]
}\noindent
Generally, the partial derivative with respect to any non-majority class $j$ with $\bt_j^*$ is not $\bm{0}$, which implies that the overall $\bT^*$ is not the maximizer of $R(\bT^*,\bT)$ even if we choose $\bl=\bT^*$. So, the LUS estimator is not consistent to $\bT^*$ for $K>2$. Similar results can be derived for the choice of Eq.~(\ref{eq:a2}).
\end{proof}

\section{Bias under Model Misspecification when $K>2$.}\label{appendix:bias}
%In the following proposition, we perform an analysis to discuss the effect of model misspecification on both the bias and variance of the LUS estimator. To this end, we denote the probability estimate with $\bT^*$ as $\bm{p}^*=(p^*(\x,1), \ldots, p^*(\x,K))^\top$ and denote the true probability distribution as $\bm{p}=(p(\x,1), \ldots, p(\x,K))^\top$. In addition, let $\bm{a}^*=(a^*(\x,1),\cdots,a^*(\x,K))^\top$ be the acceptance probability by using $\bm{\lambda}=\bT^*$ and let $\tilde{\bT}^*$ be the maximizer of $R(\bT)$ in Eq. \eqref{eq:LUSriskPop}.  

%\textcolor{blue}{
%Let $\tilde{\bT}^*$
%}
We have the following propositions quantifying the bias of the LUS estimator under model misspecification.

\begin{prop}\label{prop:mis4}
Consider a misspecified model with $K>2$. Let $\tilde{\bT}^*=\arg\max_{\bT}R(\bl,\bT)$. With the following assumptions as considered similarly in the case of correctly specified model:
\begin{itemize}
	\item the pilot estimator $\bm{\lambda}$ is set to $\bT^*$;
	\item $\|\nabla_{\bt_k}f(\x,\bt_k)\|$ and $\|\nabla_{\bt_k}^2f(\x,\bt_k)\|$ are bounded by constant $M$ for some norm $\|\cdot\|$ and $k=1,\cdots,K-1$;  
%	\item the Hessian $\nabla^2 R(\bT^*,\bT)$ is negative-definite given any $\bT$;
	\item there exists positive scalars $\omega>0$ and $\epsilon>0$ so that $\|\bm{p}^*-\bm{p}\|\leq\omega$ and $\|\tilde{\bT}^*-\bT^*\|\leq\epsilon$.
\end{itemize}
%we have $\tilde{\bT}^*\neq\bT^*$.
Then, if $\omega\rightarrow0$ is sufficiently small, then $\epsilon\rightarrow0$ as well.
\end{prop}
\begin{proof}
Let $\tilde{\bT}^*=\bT^*+\epsilon\bm{\varphi}$ with normalized non-zero vector $\bm{\varphi}$. Then, by the second order mean value theorem, there exists a $t \in [0,1]$ such that
{\small
\[
	R(\bT^*,\tilde{\bT}^*)=R(\bT^*,\bT^*)+\epsilon\bm{\varphi}^\top\nabla_{\bT} R(\bT^*,\bT^*)
	+\frac{\epsilon^2}{2}\bm{\varphi}^\top\nabla_{\bT}^2 R(\bT^*, \bT^* + t(\tilde{\bT}^*-\bT^*))\bm{\varphi}.
\]
}\noindent
%Using the fact that $\tilde{\bT}^*$ is the maximizer of $R(\bT)$ and that $\bT^*$ is not the maximizer of $R(\bT)$, we have $R(\tilde{\bT}^*) - R(\bT^*)> 0$.  Moreover, $\nabla^2 R(\bT)$ is negative-definite.  Thus, this yields the inequality
%\[
%- \bm{\varphi}^\top\nabla R(\bT^*)
%	< \frac{\epsilon}{2}\bm{\varphi}^\top\nabla^2 R( t\bT^* + t(\tilde{\bT}^*-\bT^*))\bm{\varphi} < 0.
%\]
%It can be shown that $\|\nabla R(\bT^*)\|\leq 3\omega M$, then as long as $\omega \rightarrow 0$, we have $\nabla R(\bT^*)\rightarrow \mathbf{0}$.
%Thus, this implies that $\epsilon \rightarrow 0$.  
%
%Denote $\tilde{\bT}^*=\bT^*+\epsilon\bm{\varphi}$ with $\epsilon>0$ and normalized non-zero vector $\bm{\varphi}$. Then, by the second order mean value theorem, there exists a $t\in [0,1]$ such that 
%\begin{align*}
%	R(\tilde{\bT}^*)=R(\bT^*)+\epsilon\bm{\varphi}^\top\nabla R(\bT^*)
%	+\frac{\epsilon^2}{2}\bm{\varphi}^\top\nabla^2 R(\bT^* + t(\tilde{\bT}^*-\bT^*))\bm{\varphi}.
%\end{align*}
Let $\bar{\bT}=\bT^* + t(\tilde{\bT}^*-\bT^*)$ and use the notations $\bar{\bm{p}}$ and $\bar{\nabla}_j$ to represent the probability estimate with $\bar{\bT}$ and $\nabla_{\bt_j}f(\x,\bar{\bt}_j)$. The elements in the Hessian matrix $\nabla_{\bT}^2 R(\bT^*,\bar{\bT})$ take the form
{\small
\begin{align*}
\nabla_{jj}^2R(\bT^*,\bar{\bT})
&=-\E_{\x,y\sim\D}\ 
a_y^*\cdot\frac{a_j^*\bar{p}_j\sum_{k\neq j}^Ka_k^*\bar{p}_k}{(\sum_{k=1}^Ka_k^*\bar{p}_k)^2}\cdot
\bar{\nabla}_j{\bar{\nabla}_j}^\top,
\\
&=-\E_{\x}
\left(\sum_{k=1}^{K}a_k^*p_k\right)\cdot\frac{a_j^*\bar{p}_j\sum_{k\neq j}^Ka_k^*\bar{p}_k}{(\sum_{k=1}^Ka_k^*\bar{p}_k)^2}\cdot
\bar{\nabla}_j{\bar{\nabla}_j}^\top,
\end{align*}
\begin{align*}
\nabla_{ij}^2R(\bT^*, \bar{\bT})
&=\E_{\x,y\sim\D}\ 
a_y^*\cdot\frac{a_i^*\bar{p}_ia_j^*\bar{p}_j}{(\sum_{k=1}^Ka_k^*\bar{p}_k)^2}\cdot
\bar{\nabla}_i{\bar{\nabla}_j}^\top
\\
&=\E_{\x}
\left(\sum_{k=1}^{K}a_k^*p_k\right)\cdot\frac{a_i^*\bar{p}_ia_j^*\bar{p}_j}{(\sum_{k=1}^Ka_k^*\bar{p}_k)^2}\cdot
\bar{\nabla}_i{\bar{\nabla}_j}^\top,\ \ \ \text{for}\ \ i\neq j.
\end{align*}
}\noindent
Therefore, we have
{\small
\begin{align*}
&\nabla_{\bT}^2 R(\bT^*,\bar{\bT})=-\E_{\x}\ \bm{\nabla}(\bar{\bT})\bm{S}'(\bT^*,\bar{\bT})\bm{\nabla}(\bar{\bT})^\top,
\end{align*}
}\noindent
where $\bm{S}'(\bT^*,\bar{\bT})=\frac{\sum_{k=1}^Ka_k^*p_k}{\sum_{k=1}^Ka_k^*\bar{p}_k}\bm{S}(\bT^*,\bar{\bT})\succ0$. 
Since $R(\bT^*,\bT)$ is strictly convex on $\bT$, $\nabla_{\bT}^2R(\bT^*,\bT)$ is negative-definite, and we have
{\small
\[
\frac{\epsilon^2}{2}\bm{\varphi}^\top\nabla_{\bT}^2 R(\bT^*, \bT^* + t(\tilde{\bT}^*-\bT^*))\bm{\varphi}<0.
\]
}

Next, we show that if $\omega\rightarrow0$, $\nabla_{\bT} R(\bT^*, \bT^*)\rightarrow\bm{0}$. The gradient with respect to the $i$-th class is
{\small
\begin{align*}
\nabla_{\bt_i} R(\bT^*, \bT^*)=
\left[a_ip_i-\sum_{k=1}^Ka_kp_k\cdot\frac{a_ip_i^*}{\sum_{k=1}^Ka_kp_k^*}\right]\nabla_i^*.
\end{align*}
}\noindent

From Proposition \ref{prop:mis3}, we know that if there exists $j$ such that $p_j^*\geq0.5$ then
 $\nabla_{\bt_j} R(\bT^*, \bT^*)=\bm{0}$ and $\nabla_{\bt_i} R(\bT^*, \bT^*)\neq\bm{0}$ for $i\neq j$.

Since $|p_j^*-p_j|\leq \omega$, by writing $p_j^*=p_j+\omega\delta_j$ and $p_i^*=p_i+\omega\delta_i$ for some $|\delta_i|\leq1$ and $|\delta_j|\leq1$, we have
{\small
\begin{align*}
\nabla_{\bt_i} R(\bT^*, \bT^*)=\E_{\x}\ \omega\delta_j\cdot\left[2p_i-(1-2p_i)\frac{p_i+\omega\delta_i}{1-p_j-\omega\delta_j}\right]\nabla_i^*,
\end{align*}
and when $\omega\rightarrow0$,
\begin{align}
\|\nabla_{\bt_i} R(\bT^*,\bT^*)\|\leq3\omega M\rightarrow0.
\label{eq:gradOdelta1}
\end{align}
}\noindent
Similar conclusion can be obtained for the case of $\gamma<2p_j^*$.

Since $\tilde{\bT}^*$ is the maximizer of $R(\bT^*,\bT)$, we have $R(\bT^*,\bT^*)<R(\bT^*,\tilde{\bT}^*)$. Now, combining all of the above, we obtain
{\small
\[
-\bm{\varphi}^\top\nabla_{\bT} R(\bT^*,\bT^*)
	< \frac{\epsilon}{2}\bm{\varphi}^\top\nabla_{\bT}^2 R(\bT^*,\bT^*  + t(\tilde{\bT}^*-\bT^*))\bm{\varphi} <0.
\]
}\noindent
Thus, when $\omega\rightarrow0$, $\nabla_{\bT} R(\bT^*,\bT^*)\rightarrow\bm{0}$, and we have $\epsilon\rightarrow 0$.
\end{proof}

\begin{prop}\label{prop:mis5}
Assume the same conditions in Proposition~\ref{prop:mis4}, with norm $\|\cdot\|=\|\cdot\|_\infty$. As $n\rightarrow\infty$, we have
{\small
\[
\|\hat{\bT}_{LUS}-\bT^*\|_{\infty}\leq
3\omega MKd\left[\|\H^{-1}\|_{\infty}+\delta\|\H^{-1}(\mathbf{I}+\delta\H^{-1}\Delta)^{-1}\H^{-1}\|_{\infty}\right],
\]
}\noindent
where
$\delta=(\omega+2\epsilon K)M^2$ and $\H=\E_{\x\sim\D}\bm{\nabla}(\bT^*)\mathbf{S}(\bT^*,\bT^*)\bm{\nabla}(\bT^*)^\top$ are constants, and $\Delta=\E_{\x\sim\D}\bm{\nabla}(\bT^*)\mathbf{E}\bm{\nabla}(\bT^*)^\top$ with $\mathbf{E}\in\mathbb{R}^{(K-1)\times(K-1)}$ satisfying $\|\mathbf{E}\|_{\infty}\leq1$. 
%If $\omega$ is sufficiently small, we have $\bm{\mu}\rightarrow\bm{0}$ and $\mathcal{V}_{LUS}^{miss}\preceq\gamma\mathcal{V}_{full}^{miss}$.
\end{prop}

\begin{proof}
By the mean value theorem, there exists some linear combination $\bT'=t\hat{\bT}_{LUS}+(1-t)\bT^*$ for $t\in[0,1]$ such that
{\small
\begin{align*}
\hat{\bT}_{LUS}-\bT^*
&=\nabla_{\bT}^2 \hat{R}_n(\bT^*,\bT')^{-1}[\nabla_{\bT}\hat{R}_n(\bT^*,\hat{\bT}_{LUS})-\nabla_{\bT}\hat{R}_n(\bT^*,\bT^*)]
\\
&=-\nabla_{\bT}^2 \hat{R}_n(\bT^*,\bT')^{-1}\nabla_{\bT}\hat{R}_n(\bT^*,\bT^*),
\end{align*}
}\noindent
where the second equality holds since $\hat{\bT}_{LUS}$ is the maximizer of $\hat{R}_n(\bT^*,\bT)$. Suppose there exists a $\bar{\bT}$ that $\bT'\xrightarrow{p}\bar{\bT}$ as $n\rightarrow\infty$. According to Lemma~\ref{lem:1} and the continuous mapping theorem, $\nabla_{\bT}\hat{R}_n(\bT^*,\bT^*)\xrightarrow{p}\nabla_{\bT}R(\bT^*,\bT^*)$ and $\nabla_{\bT}^2\hat{R}_n(\bT^*,\bT')\xrightarrow{p}\nabla_{\bT}^2R(\bT^*,\bar{\bT})$.
Then, by using Eq. (\ref{eq:gradOdelta1}),
{\small
\[
\|\hat{\bT}_{LUS}-\bT^*\|_{\infty}
\leq 3\omega MKd\|\nabla_{\bT}^2R(\bT^*,\bar{\bT})^{-1}\|_{\infty}.
\]
}\noindent
%Moreover, the variance is
%{\small
%\[
%\mathcal{V}_{LUS}^{miss}=
%\nabla^2R(\bT^*,\bar{\bT}^0)^{-1}Var\left(\sqrt{n}\nabla\hat{R}_n(\bT^*)\right)\nabla^2R(\bT^*,\bar{\bT}^0)^{-1}.
%\]
%}\noindent
%Let $U(\bT^*)=Var\left(\sqrt{n}\nabla\hat{R}_n(\bT^*,\bT^*)\right)$. 
%Then, we can obtain the following bounds
%{\small
%\begin{align*}
%&\|-\nabla^2R(\bT^*)-\bm{V}\|_{\infty}\leq \omega M^2,
%\\
%&\|\nabla^2R(\bar{\bT}^0)-\nabla^2R(\bT^*)\|_{\infty}\leq 2\epsilon M^2 (1+M),
%\\
%&\|U(\bT^*)-\bm{V}\|_{\infty}\leq \omega M^2.
%\end{align*}
%}\noindent
%Let $U(\bT^*)=\bm{V}+\sigma_1\tilde{\bm{\Delta}}_1$ and $-\nabla^2R(\bar{\bT}^0)=\bm{V}+\sigma_2\tilde{\bm{\Delta}}_2$.
%By using similar technique as in the proof of Corollary \ref{cor:var} and dropping small terms of order $O(\sigma_1\sigma_2^2)$, $O(\sigma_1\sigma_2)$ and $O(\sigma_2^2)$ for sufficiently small $\omega$ and $\epsilon$, we can obtain the desired results.
We now derive the following upper bound:
{\small
\begin{align}
&\|\nabla_{\bT}^2R(\bT^*,\bar{\bT})-\nabla_{\bT}^2R(\bT^*,\bT^*)\|_{\infty}\leq 2\epsilon M^2K,
\label{eq:b2}
\end{align}
}\noindent

With some abuse of notation, we will denote $\nabla_{\bt_i}f(\x,\bt_i^*)$ and $\nabla_{\bt_i}f(\x,\bar{\bt}_i)$ as $\nabla_i^*$ and $\bar{\nabla}_i$, respectively, and use the notations $a_k^*$ and $p_k^*$ to represent the acceptance probability and the probability estimate computed from $\bT^*$.
The elements in the matrix $\nabla_{\bT}^2R(\bT^*,\bT^*)$ take the form
{\small
\begin{align*}
&\nabla_{jj}^2R(\bT^*,\bT^*)
=-\E_{\x}
\left(\sum_{k=1}^{K}a_k^*p_k\right)\cdot\frac{a_j^*p_j^*\sum_{k\neq j}^Ka_k^*p_k^*}{(\sum_{k=1}^Ka_k^*p_k^*)^2}\cdot
\nabla_j^*{\nabla_j^*}^\top,
\\
&\nabla_{ij}^2R(\bT^*,\bT^*)
=-\E_{\x}
\left(\sum_{k=1}^{K}a_k^*p_k\right)\cdot\frac{a_i^*p_i^*a_j^*p_j^*}{(\sum_{k=1}^Ka_k^*p_k^*)^2}\cdot
\nabla_i^*{\nabla_j^*}^\top, \ \ \ \text{for}\ \ i\neq j.
\end{align*}
}\noindent

The elements of $\nabla_{\bT}^2R(\bT^*,\bar{\bT})$ take the form of
{\small
\begin{align*}
&\nabla_{jj}^2R(\bT^*,\bar{\bT})=-\E_{\x}
\left(\sum_{k=1}^{K}a_k^*p_k\right)\cdot\frac{a_j^*\bar{p}_j\sum_{k\neq j}^Ka_k^*\bar{p}_k}{(\sum_{k=1}^Ka_k^*\bar{p}_k)^2}\cdot
\bar{\nabla}_j\bar{\nabla}_j^{\top},
\\
&\nabla_{ij}^2R(\bT^*,\bar{\bT})=-\E_{\x}
\left(\sum_{k=1}^{K}a_k^*p_k\right)\cdot\frac{a_i^*\bar{p}_i a_j^*\bar{p}_j}{(\sum_{k=1}^Ka_k^*\bar{p}_k)^2}\cdot
\bar{\nabla}_i\bar{\nabla}_j^{\top}, \ \ \ \text{for} \ \ i\neq j.
\end{align*}
}\noindent
Let $F_j(\bar{\bT})=\frac{a_j^*\bar{p}_j}{\sum_{k=1}^Ka_k^*\bar{p}_k}\bar{\nabla}_j$ and
$G_j(\bar{\bT})=\frac{\sum_{k\neq j}^Ka_k^*\bar{p}_k}{\sum_{k=1}^Ka_k^*\bar{p}_k}\bar{\nabla}_j$ for any $j$. We have
{\small
\begin{align*}
\|\nabla_{\bt_j}F_j(\bar{\bT})\|_{\infty}
&=\left\|
\frac{a_j^*\sum_{k\neq j}^Ka_k^*\bar{p}_k}{(\sum_{k=1}^Ka_k^*\bar{p}_k)^2}
\nabla_{\bt_j}\bar{p}_j
+\frac{a_j^*\bar{p}_j}{\sum_{k=1}^Ka_k^*\bar{p}_k}\bar{\nabla}_j^2
\right\|_{\infty}
\leq2M,
\\
\|\nabla_{\bt_j}G_j(\bar{\bT})\|_{\infty}
&=\left\|
-\frac{a_j^*\sum_{k\neq j}^Ka_k^*\bar{p}_k}{(\sum_{k=1}^Ka_k^*\bar{p}_k)^2}
\nabla_{\bt_j}\bar{p}_j
+\frac{\sum_{k\neq j}^Ka_k^*\bar{p}_k}{\sum_{k=1}^Ka_k^*\bar{p}_k}\bar{\nabla}_j^2
\right\|_{\infty}
\leq2M,
\end{align*}
}\noindent
for any $\bar{\bT}$. 
Given $\|\tilde{\bT}^*-\bT^*\|_{\infty}\leq\epsilon$, we have $\|\bar{\bT}-\bT^*\|_{\infty}\leq\epsilon$.
Then, by the mean value theorem we can bound
$\|F_j(\bar{\bT})-F_j(\bT^*)\|_{\infty}\leq2\epsilon MK$ and $\|G_j(\bar{\bt})-G_j(\bt^*)\|_{\infty}\leq2\epsilon MK$ for any $j$. Then, we have the following
{\small
\begin{align*}
&\|\nabla_{jj}^2R(\bT^*,\bar{\bT})-\nabla_{jj}^2R(\bT^*,\bT^*)\|_{\infty}
\\
=&\ 
\|\E_{\x}\ \bm{a}^{*\top}\bm{p} [F_j(\bar{\bT})G_j(\bar{\bT})^{\top}-F_j(\bT^*)G_j(\bT^*)^{\top}]\|_{\infty}
\\
=&\ 
\|\E_{\x}\ \bm{a}^{*\top}\bm{p}[F_j(\bar{\bT})G_j(\bar{\bT})^{\top}-F_j(\bT^*)G_j(\bar{\bT})^{\top}+F_j(\bT^*)G_j(\bar{\bT})^{\top}-F_j(\bT^*)G_j(\bT^*)^{\top}]\|_{\infty}
\\
\leq &\ 
\|\E_{\x}\ \bm{a}^{*\top}\bm{p}[F_j(\bar{\bT})-F_j(\bT^*)]G_j(\bar{\bT})^{\top}\|_{\infty}+\|\E_{\x}\ \bm{a}^{*\top}\bm{p}F_j(\bT^*)[G_j(\bar{\bT})-G_j(\bT^*)]^{\top}\|_{\infty}
\\
\leq &\ 
2\epsilon M^2K.
\end{align*}
}\noindent
Using a similar argument, we can have $\|\nabla_{ij}^2R(\bT^*,\bar{\bT})-\nabla_{ij}^2R(\bT^*,\bT^*)\|_{\infty}\leq 2\epsilon M^2K$ for any $i\neq j$ as well. Therefore, we verify Eq. (\ref{eq:b2}). Now, by using similar technique as in the proof of Proposition \ref{prop:mis4}, we can write
$-\nabla_{\bT}^2R(\bT^*,\bar{\bT})=\H+\delta\Delta$. Then, by the Woodbury matrix identity, we obtain
{\small
\[
-\nabla_{\bT}^2R(\bT^*,\bar{\bT})^{-1}=\H^{-1}-
\delta\left[\H^{-1}\Delta(\mathbb{I}+\delta\H^{-1}\Delta)^{-1}\H^{-1}\right],
\]
}\noindent
from which we can obtain the desired bound.
\end{proof}
Proposition~\ref{prop:mis5} shows that the bias relies on $\omega$ which is the difference between the misspecified distribution $\bm{p}^*$ and the true distribution $\bm{p}$. Given the misspecified model, $\omega$ would be fixed and hence the bias of the LUS estimator should not be neglected. Fortunately, in the case that $\bm{p}^*$ is close to $\bm{p}$, i.e., we can have a sufficiently small $\omega$, the difference between $\hat{\bT}_{LUS}$ and $\bT^*$ is also bounded in Proposition~\ref{prop:mis5}. 
%Otherwise, if $\bm{p}^*$ is far away from $\bm{p}$, it is also fine because estimating a parameter that is close to $\bT^*$ is of less interest either.

\clearpage

\section{Neural Net Structure Used in MNIST Data}
\label{appendix:neural}
We adopt one of the state-of-the-art convolutional neural net structures provided in the MatConvNet library \cite{vedaldi2015matconvnet} to study the performance of different sampling methods. The net structures for obtaining the rough estimate and the LUS/US sampling based estimator are shown in Figs. \ref{fig:net}(a) and \ref{fig:net}(b), respectively, where the net structure for obtaining the rough estimate is much simpler than the one used for LUS/US based estimator to save computation. Moreover, for the simpler neural net, we set the training epoch, i.e., the effective passes over data, to be 10. For the net used for the sampling based estimators, we set the epoch to be a larger value 20.

\begin{figure*}[!ht]
\centering
\subfigure[Structure for the pilot estimator]{
\includegraphics[scale=0.6]{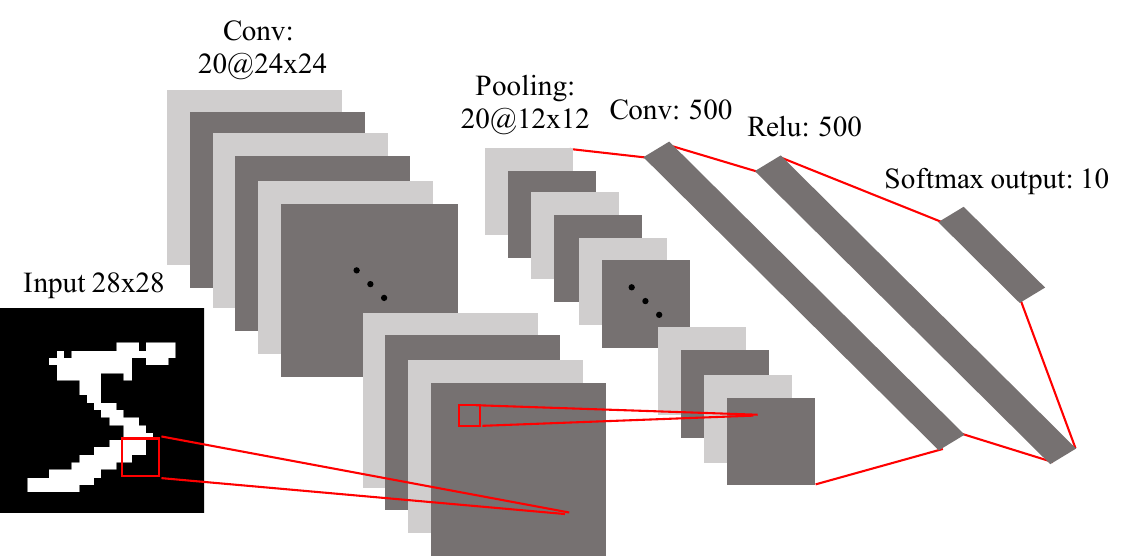}
}
\subfigure[Structure for the sampling methods]{
\includegraphics[scale=0.6]{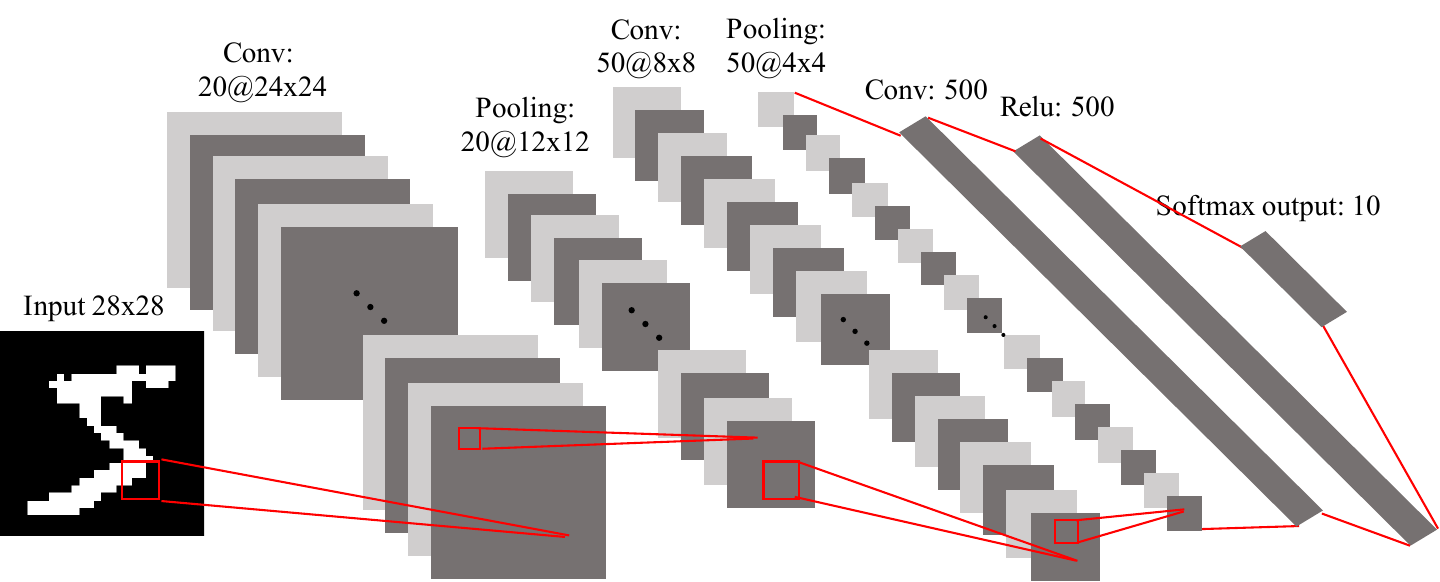}
}
\caption{Deep convolutional neural network structures used in the MNIST data.}
\label{fig:net}
\end{figure*}

\section{Simulations for Varying the Amount of Data Used for the Pilot Estimate}
\label{appendix:pilot_amount}

We provide simulation results to show how the performance of the LUS estimator changes when we vary the amount of data used for computing the pilot estimate. We keep the settings as the same as used in Section \ref{sec:simulation1} and Section \ref{sec:simulation2}. We change the portion of data used for the pilot estimate as $5\%$, $10\%$ and $20\%$, and evaluate the LUS method.

\begin{figure*}[t]
\centering
\subfigure[5\%]{
\includegraphics[scale=0.38]{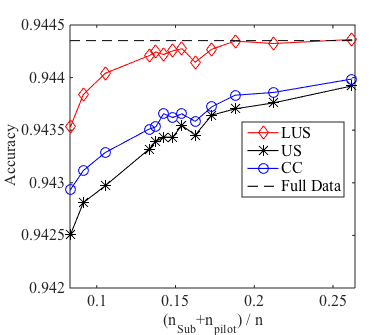}
}\hskip -0.15in
\subfigure[10\%]{
\includegraphics[scale=0.38]{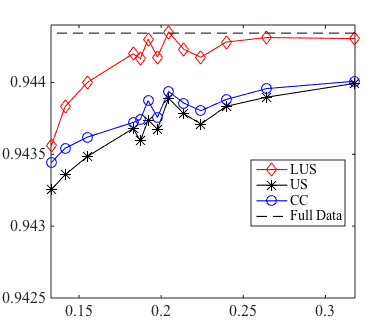}
}\hskip -0.1in
\subfigure[20\%]{
\includegraphics[scale=0.38]{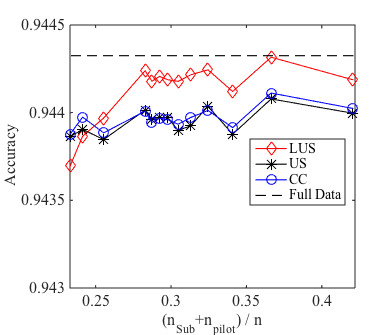}
}\hskip -0.1in
\caption{Performance of the LUS estimator when varying the amount of data for computing the pilot estimate under marginal imbalance.}
\label{fig:appendix_toy1}
\end{figure*}

\begin{figure*}[t]
\centering
\subfigure[5\%]{
\includegraphics[scale=0.38]{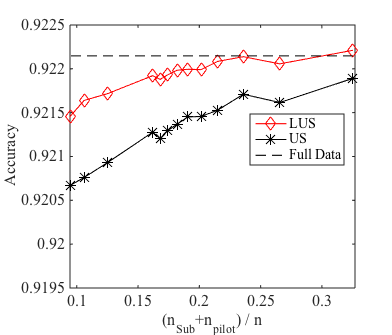}
}\hskip -0.15in
\subfigure[10\%]{
\includegraphics[scale=0.38]{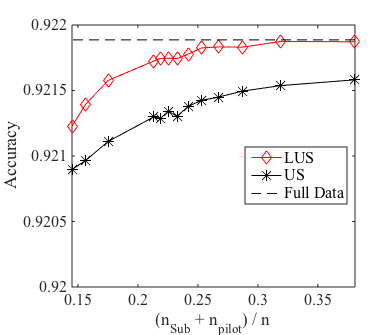}
}\hskip -0.1in
\subfigure[20\%]{
\includegraphics[scale=0.38]{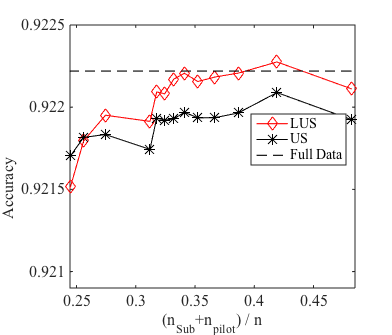}
}\hskip -0.1in
\caption{Performance of the LUS estimator when varying the amount of data for computing the pilot estimate under marginal balance.}
\label{fig:appendix_toy2}
\end{figure*}

Fig. \ref{fig:appendix_toy1} and Fig. \ref{fig:appendix_toy2} show the performance change of the LUS estimator in the simulations with marginal imbalance and marginal balance, respectively. In both of the simulations, we observe that the performance curves of the LUS method are similar that varying the amount of data for the pilot estimate does not affect the performance of the LUS estimator much and it generally needs a proportion of $\frac{n_{Sub}}{n}\approx 10\%$ to achieve the same performance as that of the full-sample based MLE; when we increase $n_{pilot}$, the uniform sampling and case-control sampling methods also sample more data points and the superiority of the LUS method is not as obvious as the settings with fewer data for the pilot estimate. The results suggest that we may use a small number of data points (e.g., $5\%$) to obtain a rough pilot estimate, which is sufficient to let the LUS method perform well.

\end{document}